\def\Title{Quantum Reality and Measurement: A Quantum Logical Approach}
\def\Author{Masanao Ozawa}
\def\Affiliation{Graduate School of Information Science, Nagoya University, 
Nagoya 464-8601, Japan}
\def\Abstract{
The recently established universal uncertainty principle revealed that two 
nowhere commuting observables can be measured simultaneously in some state,
whereas they have no joint probability distribution in any state.
Thus, one measuring apparatus can simultaneously measure two observables that 
have no simultaneous reality.
In order to reconcile this discrepancy, 
an approach based on quantum logic
is proposed to establish 
the relation between quantum reality and measurement.
We provide a language speaking of values of observables 
independent of measurement based on quantum logic and 
we construct in this language the state-dependent notions of joint determinateness,
value identity, and simultaneous measurability.
This naturally provides a contextual interpretation, in which 
we can safely claim such a statement 
that one measuring apparatus measures one observable in one context 
and simultaneously it measures another nowhere commuting observable 
in another incompatible context. 
}
\def\Keywords{quantum logic, 
quantum set theory,
quantum measurement,
joint determinateness, 
simultaneous measurability, 
contextual interpretation}
\def\PACSnumbers{03.65.Ta, 02.10.Ab , 03.65.Ud}
\documentclass[12pt]{article}
\setlength{\textwidth}{15.92 cm}
\setlength{\textheight}{23.35 cm}
\oddsidemargin=0.0truein
\evensidemargin=\oddsidemargin
\topmargin=-1.29cm
\itemsep=0in

\usepackage{amsthm}
%
                                                       %
                                                       %

                                                 %

\newcommand{\cut}[1]{}
\newcommand{\ignore}[1]{}
  \newcommand{\beq}{\begin{equation}}
  \newcommand{\eeq}{\end{equation}}
  \newcommand{\beql}[1]{\begin{equation}\label{eq:#1}}

  \newcommand{\beqa}{\begin{eqnarray}}
  \newcommand{\eeqa}{\end{eqnarray}}
  \newcommand{\beqas}{\begin{eqnarray*}}
  \newcommand{\eeqas}{\end{eqnarray*}}
  \newtheorem{theorem}{Theorem}
%

  
  \newcommand{\C}{{\bf C}}

  \newcommand{\R}{{\bf R}}




  \newcommand{\bM}{{\bf M}}

  \newcommand{\bP}{{\bf P}}
  
  \newcommand{\bS}{{\bf S}}

  \newcommand{\cC}{{\mathcal C}}

  \newcommand{\cH}{{\mathcal H}}
  \newcommand{\cI}{{\mathcal I}}
  
  \newcommand{\cK}{{\mathcal K}}
  \newcommand{\cL}{{\mathcal L}}

  \newcommand{\cO}{{\mathcal O}}
  \newcommand{\cP}{{\mathcal P}}
  \newcommand{\cQ}{{\mathcal Q}}
  \newcommand{\cR}{{\mathcal R}}
  \newcommand{\cS}{{\mathcal S}}


  \newcommand{\da}{\dagger}
  
  \newcommand{\de}{\delta}

  \newcommand{\mb}{\mbox}

  \newcommand{\ph}{\phi}
 \def\ps{\psi}
  \newcommand{\rh}{\rho}
  \newcommand{\si}{\sigma}
  

  
  
  
  \newcommand{\And}{\wedge}

  \newcommand{\De}{\Delta}

  \newcommand{\Eq}[1]{Eq.~(\ref{eq:#1})}

  \newcommand{\Not}{\neg}
  
  \newcommand{\Or}{\vee}

  \newcommand{\Sup}{\bigvee}

  \newcommand{\Tr}{\mbox{\rm Tr}}

  \newcommand{\beqan}{\begin{eqnarray*}}
  \newcommand{\beqar}[1]{\begin{equation}\label{#1}\begin{array}{l}}
  
  \newcommand{\bx}{{\bf x}}

  \newcommand{\eeqar}{\end{array}\end{equation}}
  \newcommand{\eq}[1]{(\ref{eq:#1})}


\newcommand{\ket}[1]{|#1\rangle}
\newcommand{\ketbra}[1]{|#1\rangle\langle#1|}

  \newcommand{\ran}{\mbox{\rm ran}}

\newcommand{\bmat}{\left[\begin{array}{rr}}
\newcommand{\emat}{\end{array}\right]}
\newcommand{\bvec}{\left[\begin{array}{r}}
\newcommand{\evec}{\end{array}\right]}






\newcommand{\val}[1]{[\![#1]\!]}

\newcommand{\cm}{\mathrm{com}}

\newcommand{\Sp}{{\rm Sp}}

\newcommand{\M}{\bM}
\renewcommand{\ran}{\cR}
\title{\bf \Title}
\author{\rm \Author \\ \\
\small\rm \Affiliation}
\date{}
\begin{document}
\maketitle
\begin{abstract}
\Abstract\\
KEYWORDS: \Keywords\\
PACS: \PACSnumbers
\end{abstract}

\section{Introduction}
\label{se:1}
In quantum mechanics, the value of any individual observable is postulated to be 
measured precisely.
The values of several commuting observables can be
reduced in principle by the functions of the value of a single observable, so
that they are simultaneously measurable, and their joint probability
distribution is predicted by the Born statistical formula.
However, 
there is a generic difficulty in considering the
values of non-commuting observables, or in considering their simultaneous
measurement.  

It has been accepted
that two observables are simultaneously measurable if and only if they
commute. However, this formulation is based on the state-independent
formulation in the sense that two observables are simultaneously measurable
{\em in any state} if and only if they commute {\em on the whole state space}. 
The recently established universal uncertainty principle 
\cite{03UVR,03HUR,03UPQ,04URN} revealed that two 
nowhere commuting observables can be measured simultaneously in some state 
\cite{07SMN},
whereas they have no joint probability distribution in any state. 
Thus, a single measuring apparatus can simultaneously measure two observables that 
have no simultaneous reality.  
In order to reconcile this discrepancy, we consider a fundamental question as to
the relation between quantum reality and measurement.

Here, an approach is proposed to establish the notion of the value of an observable 
independent of the measurement but obtained by the measurement.  
Our approach is based on quantum logic or quantum set theory 
\cite{Ta81,07TPQ} to deal with quantum reality. 
First, we provide a language for the values of observables based on quantum logic and 
we construct in this language a sentence that means that two observables have the same 
value in a given state \cite{05PCN,06QPC}.  
Then, a contextual interpretation can be naturally obtained and within that we can safely 
claim that one measuring apparatus can reproduce the value of an observable $A$ before
measurement as the meter value after measurement in one context, and simultaneously that
the same apparatus can reproduce the value of another nowhere commuting observable $B$ 
in another context.
This actually happens in the EPR state and in this interpretation
the simultaneous measurability of nowhere commuting observables 
does not conclude the simultaneous reality of the values of 
those observables, 
for which once a hidden variable theory was called for.

In Section \ref{se:2} we introduce the observational propositions, a propositional language 
constructed from atomic sentences meaning that a certain observable has a certain value,
and assign them the truth values and probabilities based on standard rules in quantum logic.
In Section \ref{se:3} we define a new observational proposition in the above language
meaning that given observables are jointly determinate in a given state,
and show some basic properties of this proposition.
In Section \ref{se:4} we define a new observational proposition in the above language
meaning that given two observables have the same value in a given state,
and show some basic properties of this proposition.
In Section \ref{se:5} we introduce a mathematical theory of quantum measurements
based on the notions of measuring processes, instruments, and POVMs.
In Section \ref{se:6} we define the notion of a measurement of an observable in a given
state based on the observational proposition introduced in Section \ref{se:4} so that
a measuring process is a measurement of an observable in a given state if and only if
the measured observable before the measurement and the meter observable 
after the measurement
has the same value in the given state,
and show some basic properties of this proposition.
In Section \ref{se:7} we define the notion of a simultaneous measurability of given 
observables in a given state based on the notion of a measurement of an observable in a given
state introduced in Section \ref{se:6}.  
We clarify the conceptual difference between simultaneous measurability 
and joint determinateness by the existence of $\R^2$-valued POVMs with certain properties.
We also show some cases in which nowhere commuting observables are simultaneously
measurable.
Section 8 is devoted to discussions and conclusions.

\section{Logic of observational propositions}
\label{se:2}

In this  paper, we assume for simplicity 
that every Hilbert space $\cH$ is finite dimensional and describes
a quantum system $\bS(\cH)$ unless stated otherwise.
The infinite dimensional case will be discussed elsewhere.
The observables are defined as self-adjoint operators, the states are
defined as density operators, and a vector state $\psi$ is identified
with the state $\ketbra{\psi}$.
We denote by $\cO(\cH)$ the set of observables,
by $\cS(\cH)$ the space of density operators, and 
by $\cL(\cH)$ the space of linear operators on $\cH$.
We  denote by $\R$ the set of real numbers.

We define {\em observational propositions} for $\cH$ 
by the following rules.

(R1) For any $X\in\cO(\cH)$ and $x\in \R$, the expression
$X=x$ is an observational proposition, called an {\em atomic 
observational proposition}.

(R2) If $\ph_1$ and $\ph_2$ are observational propositions,
$\Not \ph_1$ and $\ph_1\And \ph_2$ are also observational propositions.

Thus, every observational proposition is built up from atomic 
observational propositions by means of the connectives 
$\Not$ and $\And$.
We introduce the connective $\Or$ by definition.

(D1) $\ph_1\Or\ph_2:= \Not(\Not \ph_1\And\Not\ph_2)$.

We will freely use parentheses to clarify the construction.
For example, if $X_1,X_2,X_3\in\cO(\cH)$ and $x_1,x_2,x_2\in\R$,
then $(X_1=x_1\And X_2=x_2)\Or X_3=x_3$ is an
observational proposition. 

 The set of linear subspaces of a Hilbert space $\cH$ 
is a complete complemented modular lattice with the orthogonal complementation
$M\mapsto M^{\perp}$ \cite{BN36}.
The lattice operations satisfy
$M_1\And M_2=M_1\cap M_2$ and
$M_1\Or M_2=M_1+M_2$.
An operator $P$ is called a {\em projection} if $P=P^{\dagger}=P^2$.
The projection with range $M$ is denoted by $\cP (M)$
and the range of an operator $X$ is denoted by $\cR(X)$.
Then, we have $\cP(\cR(P))=P$ and $\cR(\cP(M))=M$ for all
projections $P$ and subspaces $M$, so that
the projection operators and the linear subspaces are in 
one-to-one correspondence, 
and the lattice structure is naturally introduced in the projections.
We call the lattice of projections on $\cH$ as the {\em quantum logic} of $\cH$
and denote it by $\cQ(\cH)$.
For any $X\in\cO(\cH)$ and $x\in\R$, we define the {\em spectral projection}
$E^{X}(x)$ by $E^{X}(x)=\cP(\{\ps\in\cH\mid X\ps=x\ps\})$.

For each observational proposition $\ph$, we assign its truth
value $\val{\ph}\in\cQ(\cH)$ by the following rules.

(T1) $\val{X=x}=E^{X}(x).$

(T2) $\val{\Not \ph}=\val{\ph}^{\perp}.$

(T3) $\val{\ph_1\And \ph_2}=\val{\ph_1}\And\val{\ph_2}.$

From (D1),  (T2) and (T3), we have

(T4) $\val{\ph_1\Or \ph_2}=\val{\ph_1}\Or\val{\ph_2}.$

We have $\val{(\ph_1\And\ph_2)\And \ph_3}=\val{\ph_1\And(\ph_2\And \ph_3)}$,
so that we do not distinguish the observational propositions $(\ph_1\And\ph_2)\And \ph_3$
and $\ph_1\And(\ph_2\And \ph_3)$ to denote them by $\ph_1\And\ph_2\And \ph_3$.
Similar conventions are also applied to longer propositions and the connective $\Or$.

We define the {\em probability} $\Pr\{\ph\|\rh\}$ 
of an observational proposition $\ph$ in a state $\rh$ by
$\Pr\{\ph\|\rh\}=\Tr[\val{\ph}\rh]$.
We say that {\em an observational proposition $\ph$ holds in a state $\rh$} if
$\Pr\{\ph\|\rh\}=1$.

Suppose that $X_1,\ldots,X_n\in\cO(\cH)$ are mutually commuting.
Let $x_1,\ldots,x_n\in\R$.
We have
\beqas
\val{ X_1=x_1\And\cdots\And X_n=x_n}
&=&\val{ X_1=x_1}\And\cdots\And\val{X_n=x_n}\\
&=&E^{X_1}(x_1)\And\cdots\And E^{X_n}(x_n)\\
&=&E^{X_1}(x_1)\cdots E^{X_n}(x_n).
\eeqas
Hence, we reproduce the {\em Born statistical formula}
\beqa
\Pr\{X_1=x_1\And\cdots\And X_n=x_n\|\rh\}
&=&\Tr[E^{X_1}(x_1)\cdots E^{X_n}(x_n)\rh].
\eeqa
For any polynomial $p(X_1,\ldots,X_n)$  
we also have 
\beqa
\lefteqn{
\Tr[p(X_1,\ldots,X_n)\rh]}\quad\nonumber\\
&=&
\sum_{(x_1,\ldots,x_n)\in\R^{n}}p(x_1,\ldots,x_n)
\Pr\{X_1=x_1\And\cdots\And X_n=x_n\|\rh\}
\eeqa
for any state $\rh$.
From the above, our definition of the truth vales of observational propositions 
are consistent with the standard quantum mechanics.

\section{Joint determinateness}
\label{se:3}

For observational propositions $\ph_1,\ldots,\ph_n$, we define the
observational proposition $\Sup_{j}\ph_j$ by
$\Sup_{j}\ph_j=\ph_1\Or\cdots\Or\ph_n$.
We denote by $\Sp(X)$ the set of eigenvalues of  an observable $X$.

\sloppy
For any observables
$X_1,\ldots,X_n$ we define the observational proposition
$\cm(X_1,\ldots,X_n)$ by
\beqa\label{eq:SD}
\cm(X_1,\ldots,X_n)
:=
\Sup_{x_1\in\Sp(X_1),\ldots,x_n\in\Sp(X_n)} 
X_1=x_1\And\cdots\And X_n=x_n.
\eeqa
We say that observables $X_1,\ldots,X_n$ are {\em jointly determinate} 
in a state $\rh$ if the observational proposition
$\cm(X_1,\ldots,X_n)$  holds in $\rh$.
In general, we say that observables $X_1,\ldots,X_n$ are {\em jointly determinate} 
in a state $\rh$ with probability $\Pr\{\cm(X_1,\ldots,X_n)\|\rh\}$.

Then, we have the following.
\begin{theorem}\label{th:SD}
Observables  $X_1,\ldots,X_n$ are jointly determinate in
a vector state $\ps$  if and only if the state $\ps$ is a superposition
of common eigenvectors of $X_1,\ldots,X_n$.
\end{theorem}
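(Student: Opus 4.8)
The plan is to compute the truth value $\val{\cm(X_1,\ldots,X_n)}$ explicitly as a single projection and then read off the holding condition in the vector state $\ps$. First I would apply (T1) and (T3) to the inner conjunction: for a fixed tuple $(x_1,\ldots,x_n)$,
$$\val{X_1=x_1\And\cdots\And X_n=x_n}=E^{X_1}(x_1)\And\cdots\And E^{X_n}(x_n).$$
Since the lattice meet of projections corresponds to the intersection of their ranges ($M_1\And M_2=M_1\cap M_2$), and $\cR(E^{X_j}(x_j))$ is exactly the $x_j$-eigenspace $\{\ps\in\cH\mid X_j\ps=x_j\ps\}$, this meet is the projection onto the joint eigenspace
$$M_{(x_1,\ldots,x_n)}:=\{\ps\in\cH\mid X_j\ps=x_j\ps\ \text{for all}\ j\},$$
that is, onto the space of common eigenvectors carrying the eigenvalue tuple $(x_1,\ldots,x_n)$.

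Next I would apply (T4) to the supremum in \eq{SD}. Because $\cH$ is finite dimensional each $\Sp(X_j)$ is finite, so the join ranges over finitely many tuples, and using $M_1\Or M_2=M_1+M_2$ we obtain
$$\val{\cm(X_1,\ldots,X_n)}=\cP\Bigl(\sum_{x_1,\ldots,x_n}M_{(x_1,\ldots,x_n)}\Bigr).$$
I would then argue that the subspace $M:=\sum_{(x_1,\ldots,x_n)}M_{(x_1,\ldots,x_n)}$ is precisely the span of all common eigenvectors of $X_1,\ldots,X_n$: each summand consists of common eigenvectors, and conversely every common eigenvector lies in one of the $M_{(x_1,\ldots,x_n)}$. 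Hence $\val{\cm(X_1,\ldots,X_n)}=\cP(M)$ with $M$ the span of the common eigenvectors.

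Finally I would translate the holding condition via the definition of probability. For a unit vector $\ps$,
$$\Pr\{\cm(X_1,\ldots,X_n)\|\ps\}=\Tr[\cP(M)\ketbra{\ps}]=\braket{\ps|\cP(M)|\ps}=\|\cP(M)\ps\|^2,$$
which equals $1$ if and only if $\cP(M)\ps=\ps$, i.e. $\ps\in M$. As $M$ is the span of the common eigenvectors, the condition $\ps\in M$ is exactly the statement that $\ps$ is a superposition of common eigenvectors of $X_1,\ldots,X_n$, establishing both directions simultaneously.

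Once the truth value is identified the remaining steps are bookkeeping, so I do not expect a deep obstacle. The only points demanding genuine care are the identification of the meet of spectral projections with the projection onto the intersection of eigenspaces, and the verification that the finite join of the joint eigenspaces really recovers the full span of common eigenvectors rather than a proper subspace. Finite dimensionality eliminates any closure or completeness concern at both points, so the argument stays clean; the infinite-dimensional case, deferred in the paper, is precisely where one would need to revisit these lattice-theoretic identities.
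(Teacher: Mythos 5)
Your proof is correct and follows essentially the same route as the paper's: identify the truth value of each conjunction $X_1=x_1\And\cdots\And X_n=x_n$ as the projection onto the corresponding joint eigenspace, take the (finite) join to obtain the projection onto the span of all common eigenvectors, and note that the proposition holds in a vector state exactly when the state lies in that range. You are simply spelling out the lattice-theoretic bookkeeping that the paper's one-line proof leaves implicit.
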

\begin{proof}
It is easy to see that $\ps\in\cR(\val{X_1=x_1\And\cdots\And X_n=x_n})$
if and only if $\ps$ is a common eigenstate belonging to eigenvalues $x_1,\ldots,x_n$
for observables $X_1,\ldots,X_n$, respectively, so that the assertion follows easily.
\end{proof}

Two observables $X$ and $Y$ are said to be {\em nowhere commuting} if there is no
common eigenstate.  From Theorem \ref{th:SD}, $X$ and $Y$ are nowhere commuting
if and only if $\val{\cm(X,Y)}=0$.  

\sloppy
A probability distribution $\mu(x_1,\ldots,x_n)$ on $\R^{n}$,
i.e., $\mu:\R^{n}\to [0,1]$ and  
$\sum_{(x_1,\ldots,x_n)\in\R^{n}}\mu(x_1,\ldots,x_n)=1$,
is called a {\em joint probability distribution} of 
$X_1,\ldots,X_n\in\cO(\cH)$ in $\rh\in\cS(\cH)$ if
\beqa
\mu(x_1,\ldots,x_n)
=\Pr\{X_1=x_1\And\cdots\And X_n=x_n\|\rh\}.
\eeqa
It is easy to see that a joint probability distribution $\mu$
of $X_1,\ldots,X_n$ in $\rh$ is unique, if any.

The notion of joint probability distributions is inherently a state-dependent
notion. It is well-known that the existence of the joint probability distribution
in any state is equivalent to the commutativity of observables under consideration
\cite{Gud68}.
Since the joint determinateness is naturally considered to be the state-dependent 
notion of commutativity, it is naturally expected that the joint determinateness
is equivalent to the state-dependent existence of the joint probability distribution,
as shown below.
 
\sloppy
\begin{theorem}\label{th:JPD}
Observables $X_1,\ldots,X_n$ are
jointly determinate in a state $\rh$
if and only if
there exists a joint probability distribution of $X_1,\ldots,X_n$ in
$\rh$. 
In this case, for any polynomial $p(X_1,\ldots,X_n)$ of observables
$X_1,\ldots,X_n$, we have
\beqa
\lefteqn{
\Tr[p(X_1,\ldots,X_n)\rh]}\quad\nonumber\\
&=&
\sum_{(x_1,\ldots,x_n)\in\R^{n}}p(x_1,\ldots,x_n)
\Pr\{X_1=x_1\And\cdots\And X_n=x_n\|\rh\}.
\eeqa
\end{theorem}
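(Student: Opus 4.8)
The plan is to reduce both assertions to a single computation of the truth value $\val{\cm(X_1,\ldots,X_n)}$ as an honest operator sum of mutually orthogonal projections. Write $P(x_1,\ldots,x_n):=\val{X_1=x_1\And\cdots\And X_n=x_n}=E^{X_1}(x_1)\And\cdots\And E^{X_n}(x_n)$. First I would observe that for distinct tuples these projections are mutually orthogonal: if $(x_1,\ldots,x_n)\neq(y_1,\ldots,y_n)$ then $x_i\neq y_i$ for some $i$, and since $\cR(P(x_1,\ldots,x_n))\subseteq\cR(E^{X_i}(x_i))$ while $\cR(P(y_1,\ldots,y_n))\subseteq\cR(E^{X_i}(y_i))$, these ranges are orthogonal because eigenvectors of the self-adjoint $X_i$ for distinct eigenvalues are orthogonal. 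This is exactly the common-eigenvector description established in Theorem~\ref{th:SD}. Consequently the lattice join in the definition of $\cm$ coincides with the operator sum, and since $E^{X_i}(x_i)=0$ off $\Sp(X_i)$ the range of summation may be extended to all of $\R^n$ without change:
\[
Q:=\val{\cm(X_1,\ldots,X_n)}=\sum_{(x_1,\ldots,x_n)\in\R^n}P(x_1,\ldots,x_n),
\]
which is again a projection.

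Given this, the equivalence is immediate. Each $\Pr\{X_1=x_1\And\cdots\And X_n=x_n\|\rh\}=\Tr[P(x_1,\ldots,x_n)\rh]$ lies in $[0,1]$, so the only remaining condition for the function $(x_1,\ldots,x_n)\mapsto\Pr\{X_1=x_1\And\cdots\And X_n=x_n\|\rh\}$ to be a probability distribution, that is, for a joint probability distribution to exist, is that it sum to $1$. Summing the displayed identity against $\rh$ gives $\sum_{(x_1,\ldots,x_n)}\Tr[P(x_1,\ldots,x_n)\rh]=\Tr[Q\rh]=\Pr\{\cm(X_1,\ldots,X_n)\|\rh\}$. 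Hence the existence of a joint probability distribution and joint determinateness are both equivalent to $\Tr[Q\rh]=1$, which proves the first assertion.

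For the polynomial identity I would exploit that $\Tr[Q\rh]=1$, together with $0\le Q\le I$, forces $\rh$ to be supported in $\cR(Q)$, so that $\rh=Q\rh Q$. The key point is that, although the $X_i$ need not commute on $\cH$, they all leave $\cR(Q)=\bigoplus_{(x_1,\ldots,x_n)}\cR(P(x_1,\ldots,x_n))$ invariant and act as scalars on each summand: on $\cR(P(x_1,\ldots,x_n))$ one has $X_iP(x_1,\ldots,x_n)=x_iP(x_1,\ldots,x_n)$, so any monomial $X_{i_1}\cdots X_{i_k}$ multiplies each common eigenvector in $\cR(P(x_1,\ldots,x_n))$ by $x_{i_1}\cdots x_{i_k}$, independently of the ordering of the factors. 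Therefore $Q\,p(X_1,\ldots,X_n)\,Q=\sum_{(x_1,\ldots,x_n)}p(x_1,\ldots,x_n)P(x_1,\ldots,x_n)$ for every polynomial $p$, and using $\rh=Q\rh Q$ with the cyclicity of the trace yields $\Tr[p(X_1,\ldots,X_n)\rh]=\Tr[Q\,p(X_1,\ldots,X_n)\,Q\,\rh]=\sum_{(x_1,\ldots,x_n)}p(x_1,\ldots,x_n)\Tr[P(x_1,\ldots,x_n)\rh]$, which is the claimed formula.

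Setting aside the routine equivalence half, the one step deserving genuine care is this last reduction, since the formula looks suspicious precisely because $p(X_1,\ldots,X_n)$ mixes non-commuting factors. The resolution is that joint determinateness confines the state to the subspace $\cR(Q)$, on which all the $X_i$ are simultaneously diagonal. I expect the main obstacle to be making rigorous that each $X_i$ preserves $\cR(Q)$ and that the ordering of factors in a monomial becomes irrelevant once the action is restricted there; both follow cleanly from the per-summand common-eigenvector description, but they are the substantive content behind the otherwise one-line trace computation.
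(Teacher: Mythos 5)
Your proof is correct. The paper itself does not spell out an argument here: it simply defers to ``an argument similar to the proof of Theorem 3.3 of Gudder,'' so your write-up supplies the details that the paper leaves to a reference. The route you take --- observing that the projections $P(x_1,\ldots,x_n)=E^{X_1}(x_1)\And\cdots\And E^{X_n}(x_n)$ onto the common eigenspaces are mutually orthogonal, so that $\val{\cm(X_1,\ldots,X_n)}$ is literally the operator sum $Q=\sum P(x_1,\ldots,x_n)$, and then reading off both the equivalence (both conditions amount to $\Tr[Q\rh]=1$) and the polynomial identity (from $\rh=Q\rh Q$ and the fact that every monomial acts as the scalar $x_{i_1}\cdots x_{i_k}$ on each common eigenspace regardless of factor ordering) --- is essentially the finite-dimensional content of Gudder's argument, and all the individual steps check out: the orthogonality claim, the identification of the lattice join with the sum, the harmless extension of the summation from $\prod_i\Sp(X_i)$ to $\R^n$, the deduction of $\rh=Q\rh Q$ from $\Tr[Q\rh]=1$, and the per-summand scalar action. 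What your version buys is a self-contained verification entirely within the paper's own formalism (and its standing finite-dimensionality assumption), at the cost of not covering the general setting that Gudder's theorem addresses; within the scope of this paper that is no loss.
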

\begin{proof}
The assertion follows from an argument similar to 
the proof of Theorem 3.3 of Gudder \cite{Gud68}.
\end{proof}

Let $\ph(X_1,\ldots,X_n)$ be an observational proposition 
that includes symbols for and only for observables $X_1,\ldots,X_n$.
Then, $\ph(X_1,\ldots,X_n)$ is said to be {\em contextually well-formed} in a state $\rh$
if $X_1,\ldots,X_n$ are jointly determinate in $\rh$.
The following is an easy consequence from the transfer 
principle in quantum set theory \cite{07TPQ}.

\begin{theorem}
Let $\ph(X_1,\ldots,X_n)$ be an observational proposition 
that includes symbols for and only for observables $X_1,\ldots,X_n$.
Suppose that $\ph(X_1,\ldots,X_n)$ is a tautology in classical logic.
Then, we have 
\beqa
\val{\cm(X_1,\ldots,X_n)}\le\val{\ph(X_1,\ldots,X_n)}.
\eeqa
In particular, if $\ph(X_1,\ldots,X_n)$ is contextually well-formed
in a state $\rh$, then $\ph(X_1,\ldots,X_n)$ holds in $\rh$.
\end{theorem}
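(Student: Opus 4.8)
The plan is to work inside the subspace $M=\cR(\val{\cm(X_1,\ldots,X_n)})$, on which the observables are simultaneously diagonalized, and to show that compression to $M$ converts the quantum truth-value assignment into an ordinary Boolean homomorphism. Write $P=\val{\cm(X_1,\ldots,X_n)}$, and for each tuple set $P_{x_1,\ldots,x_n}=\val{X_1=x_1\And\cdots\And X_n=x_n}$, which equals $E^{X_1}(x_1)\And\cdots\And E^{X_n}(x_n)$ by (T1) and (T3). By Theorem~\ref{th:SD} these are the common-eigenspace projections; for distinct tuples they are mutually orthogonal, since two common eigenvectors differing in some eigenvalue are orthogonal. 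Hence by (T4) $P=\Sup_{x_1,\ldots,x_n}P_{x_1,\ldots,x_n}=\sum P_{x_1,\ldots,x_n}$ is the sum of an orthogonal family. Let $S=\{(x_1,\ldots,x_n)\mid P_{x_1,\ldots,x_n}\neq0\}$, and note every such tuple satisfies $x_i\in\Sp(X_i)$.

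The technical core is a \emph{homomorphism lemma}, proved by induction on the construction of an observational proposition $\ps$ in the symbols $X_1,\ldots,X_n$: (a) $P$ commutes with $\val{\ps}$, and (b) $P\And\val{\ps}$ is the sum of exactly those atoms $P_{x_1,\ldots,x_n}$ with $(x_1,\ldots,x_n)\in S$ at which the classical value-assignment $X_i\mapsto x_i$ makes $\ps$ true. For the base case $\ps=(X_i=x_i)$ one checks directly that $PE^{X_i}(x_i)=E^{X_i}(x_i)P$ equals the sum of the $P_{x_1,\ldots,x_n}$ over tuples in $S$ whose $i$-th entry is $x_i$, giving (a) and (b). The conjunction step uses that if $P$ commutes with two projections then it commutes with their meet, because the meet projects onto the intersection of two $P$-invariant ranges, which is again $P$-invariant; part (b) follows from associativity of $\And$ together with the fact that sums of atoms commute, so their quantum meet coincides with the Boolean meet. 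The negation step is where carrying (a) through the induction pays off: since $P$ commutes with $\val{\ps}$, the space $M$ splits as $(M\cap\cR(\val{\ps}))\oplus(M\cap\ker(\val{\ps}))$, whence $P\And\val{\Not\ps}=P\And(I-\val{\ps})=P-(P\And\val{\ps})$, the Boolean complement within $M$.

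To finish, I would identify the Boolean algebra generated by the atoms with the power set of $S$; under this identification the lemma says that $P\And\val{\ps}$ corresponds to the set of tuples in $S$ satisfying $\ps$ classically, exactly as a truth-table evaluation gives. If $\ph(X_1,\ldots,X_n)$ is a classical tautology, it is satisfied by every assignment in $\prod_i\Sp(X_i)$, in particular at every tuple of $S$, so $P\And\val{\ph}=\sum_S P_{x_1,\ldots,x_n}=P$; that is, $\val{\cm(X_1,\ldots,X_n)}=P\le\val{\ph(X_1,\ldots,X_n)}$. The ``in particular'' clause is then immediate: if $\ph$ is contextually well-formed in $\rh$ then $\Tr[P\rh]=\Pr\{\cm(X_1,\ldots,X_n)\|\rh\}=1$, and from $P\le\val{\ph}$ with positivity of $\rh$ we get $\Tr[\val{\ph}\rh]\ge\Tr[P\rh]=1$, forcing $\Pr\{\ph\|\rh\}=1$. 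Alternatively, this inequality is precisely the instance of the transfer principle of quantum set theory~\cite{07TPQ} for the family $X_1,\ldots,X_n$, whose joint commutator is $\val{\cm(X_1,\ldots,X_n)}$; I regard the self-contained argument above as the honest unwinding of that principle in the present finite-dimensional setting.

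I expect the main obstacle to be the negation step, and more precisely the bookkeeping that guarantees $P$ commutes with $\val{\ps}$ at every stage, since this is exactly what makes complementation behave classically on $M$. The remaining subtlety is to pin down the intended reading of ``tautology'' so that the atoms of $S$ correspond precisely to the admissible classical value-assignments $(x_1,\ldots,x_n)\in\prod_i\Sp(X_i)$; once this is fixed, the homomorphism $P\And\val{\cdot}$ automatically respects the constraints that each $X_i$ takes exactly one value, and a truth-table argument yields $P\And\val{\ph}=P$.
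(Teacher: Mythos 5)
Your argument is correct, but it takes a genuinely different route from the paper: the paper offers no internal proof at all, simply declaring the theorem ``an easy consequence of the transfer principle in quantum set theory''~\cite{07TPQ}, whereas you unwind that principle into a self-contained finite-dimensional argument. Your key observations all check out: the projections $P_{x_1,\ldots,x_n}=E^{X_1}(x_1)\And\cdots\And E^{X_n}(x_n)$ onto common eigenspaces are mutually orthogonal for distinct tuples, so $P=\val{\cm(X_1,\ldots,X_n)}$ is their sum; the base case works because $E^{X_i}(x)P_{x_1,\ldots,x_n}$ is $P_{x_1,\ldots,x_n}$ or $0$ according as $x_i=x$ or not; the induction hypothesis that $P$ commutes with $\val{\ps}$ is exactly what makes the meet distribute as a product of commuting projections in the conjunction step and makes $P\And\val{\ps}^{\perp}=P-(P\And\val{\ps})$ in the negation step; and a classical tautology is satisfied at every admissible tuple, giving $P\And\val{\ph}=P$, i.e.\ $P\le\val{\ph}$. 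The trade-off is clear: the paper's citation buys generality (the transfer principle covers the full quantum set-theoretic language, including the infinite-dimensional case the paper defers), while your compression-to-$M$ argument buys transparency --- it exhibits concretely the Boolean homomorphism $\ps\mapsto P\And\val{\ps}$ onto the power set of the nonzero common-eigenspace tuples, and incidentally establishes the slightly stronger statement that $\ph$ need only be valid under value-assignments respecting ``each $X_i$ takes exactly one spectral value,'' not a tautology under arbitrary truth assignments to the atoms. The final step to the ``in particular'' clause via $\Tr[P\rh]=1$ and monotonicity of the trace is exactly right.
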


\section{Value identity of observables}
\label{se:4}

For any observables $X,Y$, we define 
the observational proposition $X=Y$ by 
\beqa\label{eq:ID}
X=Y:=
\Sup_{x\in \Sp(X)}X=x\And Y=x.
\eeqa
We say that  
observables $X$ and $Y$ {\em have the same value in a state $\rh$}
if $X=Y$ holds in $\rh$.
In this case, we shall write $X=_\rh Y$.
In general, we say that 
observables $X$ and $Y$ {\em have the same value in a state $\rh$
with probability $\Pr\{X=Y\|\rh\}$}.

\begin{theorem}\label{th:ID}
For any $X,Y\in\cO(\cH)$ and $\rh\in\cS(\cH)$, we have 
$X=_\rh Y$ if and only if there exists a joint probability distribution $\mu$
of $X,Y$ in $\rh$ such that 
\beqa\label{eq:ID-1}
\sum_{x\in\R}\mu(x,x)=1.
\eeqa
\end{theorem}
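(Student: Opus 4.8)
The plan is to reduce the whole biconditional to a single orthogonality identity and then to invoke Theorem~\ref{th:JPD} for the half that needs the existence of a genuine distribution. Write $P_x:=E^{X}(x)\And E^{Y}(x)$, so that by (T4) and the definition \eqref{eq:ID} we have $\val{X=Y}=\Sup_{x\in\Sp(X)}P_x$. The first step I would carry out is to observe that the family $\{P_x\}_{x\in\Sp(X)}$ is mutually orthogonal: since $\cR(P_x)\subseteq\cR(E^{X}(x))$ and eigenspaces of the self-adjoint operator $X$ belonging to distinct eigenvalues are orthogonal, we get $P_xP_{x'}=0$ for $x\neq x'$. The lattice join of finitely many mutually orthogonal projections coincides with their sum (this is just $M_1\Or M_2=M_1+M_2$ applied to orthogonal ranges), so
\[
\val{X=Y}=\Sup_{x\in\Sp(X)}P_x=\sum_{x\in\Sp(X)}P_x.
\]
From this the bookkeeping identity that drives both directions follows at once: for any family satisfying $\mu(x,x)=\Pr\{X=x\And Y=x\|\rh\}=\Tr[P_x\rh]$ we have $\sum_{x}\mu(x,x)=\Tr[\val{X=Y}\rh]=\Pr\{X=Y\|\rh\}$.

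With this in hand the \emph{if} direction is immediate: given a joint probability distribution $\mu$ of $X,Y$ in $\rh$ with $\sum_{x}\mu(x,x)=1$, the displayed identity gives $\Pr\{X=Y\|\rh\}=1$, i.e.\ $X=_\rh Y$. For the \emph{only if} direction I would start from $X=_\rh Y$, that is $\Tr[\val{X=Y}\rh]=1$, and first upgrade value identity to joint determinateness. Since the diagonal join is dominated by the full double join,
\[
\val{X=Y}=\Sup_{x\in\Sp(X)}P_x\ \le\ \Sup_{x\in\Sp(X),\,y\in\Sp(Y)}\bigl(E^{X}(x)\And E^{Y}(y)\bigr)=\val{\cm(X,Y)},
\]
we obtain $\Tr[\val{\cm(X,Y)}\rh]=1$, so that $X,Y$ are jointly determinate in $\rh$. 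Theorem~\ref{th:JPD} then furnishes a (unique) joint probability distribution $\mu$ of $X,Y$ in $\rh$, necessarily with $\mu(x,x)=\Tr[P_x\rh]$, and the bookkeeping identity yields $\sum_{x}\mu(x,x)=\Tr[\val{X=Y}\rh]=1$, completing the proof.

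The step that carries all the weight is the passage from the lattice join $\Sup_x P_x$ to the orthogonal sum $\sum_x P_x$, together with the accompanying orthogonality check on the $P_x$; once that is secured everything else is a trace computation. The only genuinely external ingredient is Theorem~\ref{th:JPD}, and it is needed precisely in the \emph{only if} direction: without first establishing joint determinateness one has no guarantee that the candidate weights $\Tr[(E^{X}(x)\And E^{Y}(y))\rh]$ sum to one over $\R^2$, so they need not define a joint probability distribution at all. I therefore expect the inequality $\val{X=Y}\le\val{\cm(X,Y)}$ to be the conceptual hinge, as it is what converts the hypothesis $X=_\rh Y$ into the existence statement of Theorem~\ref{th:JPD}.
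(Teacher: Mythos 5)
Your proof is correct and follows essentially the same route as the paper's: the converse direction is the same trace computation converting the join $\Sup_{x}(E^{X}(x)\And E^{Y}(x))$ into a sum over the mutually orthogonal diagonal projections, and the forward direction likewise passes through $\Pr\{\cm(X,Y)\|\rh\}=1$ and Theorem~\ref{th:JPD}. You merely make explicit two steps the paper leaves implicit, namely the orthogonality of the projections $E^{X}(x)\And E^{Y}(x)$ and the inequality $\val{X=Y}\le\val{\cm(X,Y)}$.
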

\begin{proof}
Suppose $X=_\rh Y$. We have $\Pr\{\cm(X,Y)\|\rh\}=1$, so that from
Theorem \ref{th:JPD}, $X$ and $Y$ have the joint probability distribution
$\mu$,  which satisfies \Eq{ID-1}.
Conversely, if we have the joint probability distribution $\mu$ satisfying
\Eq{ID-1}, we have
\beqas
\Pr\{X=Y\|\rh\}
&=&\Tr\left[\left(\Sup_{x\in \Sp(X)}E^{X}(x)\And E^{Y}(x)\right)\rh\right]\\
&=&\sum_{x\in\R}\Tr[(E^{X}(x)\And E^{Y}(x))\rh]
=\sum_{x\in\R}\mu(x,x)=1.
\eeqas
Thus, we have $X=_\rh Y$.
\end{proof}

In order to consider the state-dependent notion of measurement of observables, 
the notion of quantum perfect correlation, or quantum identical 
correlation, between two observables has been previously introduced in Ref.~\cite{05PCN,06QPC}  
as follows. We say that two observables $X,Y\in\cO(\cH)$ are {\em identically correlated}  
in $\rho\in\cS(\cH)$ if $\Tr[E^{X}(x)E^{Y}(y)\rh]=0$ for any $x,y\in\R$ with $x\ne y$. We 
say that two observables $X,Y\in\cO(\cH)$ are {\em identically distributed}  in a state 
$\rh\in\cS(\cH)$ if $\Tr[E^{X}(x)\rho]=\Tr[E^{Y}(x)\rho]$ for any $x\in\R$. The {\em cyclic subspace} of 
$\cH$ spanned by an observable $X$ and a state $\rho$ is the linear subspace $\cC(X,\rho)$ 
defined by
\beqa
\cC(X,\rho)
=\{p(X)\ps\mid \mb{$p(X)$ is a polynomial in $X$ and 
$\ps\in\ran(\rho)$}\}^{\perp\perp},
\eeqa
or equivalently $\cC(X,\rho)$ is the linear subspace spanned by
$X^{n}\ps$ for all $n=0,1,\ldots$ and all $\ps\in\ran(\rho)$.
Then we obtain the following theorem.

\begin{theorem}\label{th:ID'}
For any two observables $X,Y\in\cO(\cH)$ and any state $\rho\in\cS(\cH)$,
the following conditions are equivalent.

(i) $X$ and $Y$ have the same value in $\rho$.

(ii) $X$ and $Y$ are identically correlated in $\rho$.

(iii) $X$ and $Y$ are identically distributed in all $\psi\in\cC(X,\rho)$.

(iv) $f(X)\rho=f(Y)\rho$ for any function $f$.

(v) $X=Y$ on $\cC(X,\rho)$.
\end{theorem}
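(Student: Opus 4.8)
The plan is to take condition (iv) as the hub through which all the others pass, since it is the most operator-theoretic statement and the easiest to transfer. Writing $P_x=E^{X}(x)$ and $Q_y=E^{Y}(y)$ for the spectral projections, I would first record that (iv) is equivalent to the family of identities $P_x\rho=Q_x\rho$ for all $x\in\R$: taking $f$ to be the indicator of $\{x\}$ gives one direction, and summing $f(x)P_x\rho=f(x)Q_x\rho$ against an arbitrary $f$ gives the converse. I would then establish the cycle (i)$\Rightarrow$(ii)$\Rightarrow$(iv)$\Rightarrow$(i) together with the two side equivalences (iv)$\Leftrightarrow$(v) and (iv)$\Leftrightarrow$(iii).

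The implication (iv)$\Rightarrow$(ii) is immediate, since $Q_y\rho=P_y\rho$ turns $\Tr[P_xQ_y\rho]$ into $\Tr[P_xP_y\rho]=0$ for $x\neq y$. The reverse (ii)$\Rightarrow$(iv) is the cleanest nontrivial step: I would pass to the Hilbert--Schmidt space with $\langle A,B\rangle=\Tr[A^{\dagger}B]$ and set $a_x=P_x\rho^{1/2}$, $b_x=Q_x\rho^{1/2}$, so that $\langle a_x,b_y\rangle=\Tr[P_xQ_y\rho]$ and the $a_x$ (resp.\ $b_x$) are mutually orthogonal. Since $\sum_x P_x=\sum_y Q_y=I$, the Born-type sum $\sum_{x,y}\Tr[P_xQ_y\rho]=\Tr[\rho]=1$ together with (ii) gives $\sum_x\langle a_x,b_x\rangle=1$, while $\sum_x\|a_x\|^2=\sum_x\|b_x\|^2=1$. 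Hence $\sum_x\|a_x-b_x\|^2=0$, forcing $P_x\rho^{1/2}=Q_x\rho^{1/2}$ and therefore $P_x\rho=Q_x\rho$, which is (iv).

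For the geometric conditions I would exploit that $M:=\cC(X,\rho)$ is a reducing subspace for the self-adjoint $X$ (hence for every $P_x$) and contains $\ran(\rho)$. For (iv)$\Rightarrow$(v) and (iv)$\Rightarrow$(iii) I would first upgrade $P_x=Q_x$ on $\ran(\rho)$ to $P_x=Q_x$ on all of $M$: for $\psi\in\ran(\rho)$ one has $P_yX^{n}\psi=y^{n}P_y\psi$ and, using $P_x\psi=Q_x\psi$, also $Q_yX^{n}\psi=y^{n}Q_y\psi$, so the two agree on the spanning vectors $X^{n}\psi$ of $M$; then $X=Y$ on $M$ and the distributions of $X,Y$ coincide in every $\psi\in M$. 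The converse (v)$\Rightarrow$(iv) uses that $X=Y$ on the $X$-reducing subspace $M$ makes $M$ also $Y$-reducing, so the spectral projections of $X|_M$ and $Y|_M$ coincide, giving $P_x=Q_x$ on $M\supseteq\ran(\rho)$. Finally, (iv)$\Rightarrow$(i) and (i)$\Rightarrow$(ii) go through the mutually orthogonal projections $R_x=P_x\wedge Q_x$ onto $\ran(P_x)\cap\ran(Q_x)$: from $P_x=Q_x$ on $M$ each $P_x\psi$ with $\psi\in\ran(\rho)$ lies in $\ran(R_x)$, so $\ran(\rho)\subseteq\ran(\Sup_x R_x)=\ran(\val{X=Y})$, which is (i); conversely (i) yields $P_x\psi=R_x\psi$ and $Q_y\psi=R_y\psi$ on $\ran(\rho)$, whence $\langle P_x\psi,Q_y\psi\rangle=0$ for $x\neq y$, which is (ii).

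The main obstacle I expect is the implication (iii)$\Rightarrow$(iv), because identical distributions provide only the diagonal data $\langle\psi|P_x|\psi\rangle=\langle\psi|Q_x|\psi\rangle$ on $M$, a priori weaker than identical correlations. Polarizing the vanishing Hermitian form $\langle\cdot|(P_x-Q_x)|\cdot\rangle$ on $M$ and using that $M$ reduces $X$ yields only $P_x\psi=P_M Q_x\psi$ for $\psi\in M$, where $P_M$ is the projection onto $M$. The key extra input is the norm equality $\|P_x\psi\|^2=\langle\psi|P_x|\psi\rangle=\langle\psi|Q_x|\psi\rangle=\|Q_x\psi\|^2$, again from (iii): combined with $\|P_x\psi\|=\|P_M Q_x\psi\|\le\|Q_x\psi\|$ it forces $Q_x\psi\in M$, so $M$ is $Q_x$-invariant and $P_x\psi=Q_x\psi$ on $M$. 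This is the step where one must genuinely use both the cyclic structure of $\cC(X,\rho)$ and the equality, not merely the inequality, of the distributions.
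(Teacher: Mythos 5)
Your proof is correct, but it is genuinely different in character from the paper's: the paper does not argue the equivalences at all, instead deferring (i)$\Leftrightarrow$(ii) to Theorem \ref{th:ID} together with Theorem 5.3 of Ref.~\cite{06QPC}, and (ii)--(v) to Theorems 3.1, 3.2 and 3.4 of that reference. You give a self-contained argument, and every step checks out in the finite-dimensional setting the paper works in: the reduction of (iv) to $E^{X}(x)\rho=E^{Y}(x)\rho$, the Hilbert--Schmidt computation $\sum_x\|E^{X}(x)\rho^{1/2}-E^{Y}(x)\rho^{1/2}\|^2=2-2\sum_x\Tr[E^{X}(x)E^{Y}(x)\rho]=0$ for (ii)$\Rightarrow$(iv), the propagation of $E^{X}(x)\psi=E^{Y}(x)\psi$ from $\ran(\rho)$ to the spanning vectors $X^{n}\psi$ of $\cC(X,\rho)$, the passage through the orthogonal family $E^{X}(x)\wedge E^{Y}(x)$ for (i), and -- the one step requiring real care -- the polarization-plus-norm-equality argument forcing $Q_x$-invariance of $\cC(X,\rho)$ in (iii)$\Rightarrow$(iv). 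You correctly identified that last implication as the only place where the equality (rather than an inequality) of distributions and the cyclic structure must both be used; this mirrors the content of Theorem 3.2 of Ref.~\cite{06QPC}, which the paper invokes as a black box. What your route buys is a proof readable without the companion paper, and it makes visible that only elementary spectral calculus and the parallelogram-type identity in Hilbert--Schmidt space are needed; what it gives up is the extra generality of the cited theorems (which are stated for infinite-dimensional spaces and general spectral measures, where "$P_x=p(X)$ for a polynomial $p$'' and the purely atomic decompositions you rely on would need to be replaced by approximation arguments).
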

\begin{proof}
The equivalence between (i) and (ii) follows from Theorem \ref{th:ID} above
and Theorem  5.3 in Ref.~\cite{06QPC}.
The rest of the assertions follow from
Theorems 3.1, 3.2, and 3.4 in Ref.~\cite{06QPC}.
\end{proof}

The following theorem follows from Theorem \ref{th:ID'} and 
Theorem 4.4 in Ref.~\cite{06QPC}.

\begin{theorem} The relation $=_\rh$ is an equivalence 
relation on $\cO(\cH)$.
In particular, it is transitive, i.e., 
if $X=_\rh Y$ and $Y=_\rh Z$, then  $X=_\rh Z$
for all $X,Y,Z\in\cO(\cH)$.
\end{theorem}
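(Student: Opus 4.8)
The plan is to verify the three defining properties of an equivalence relation---reflexivity, symmetry, and transitivity---working throughout in terms of the truth value $\val{X=Y}\in\cQ(\cH)$ and the probability $\Pr\{X=Y\|\rh\}=\Tr[\val{X=Y}\rh]$, so that $X=_\rh Y$ means $\Tr[\val{X=Y}\rh]=1$.

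First, reflexivity. Since $E^{X}(x)\And E^{X}(x)=E^{X}(x)$, the definition \eq{ID} gives $\val{X=X}=\Sup_{x\in\Sp(X)}E^{X}(x)$. In finite dimensions the spectral projections of a self-adjoint operator are mutually orthogonal and sum to the identity, so $\Sup_{x\in\Sp(X)}E^{X}(x)=I$ and $\Pr\{X=X\|\rh\}=\Tr[I\rh]=1$. Hence $X=_\rh X$ for every $X$ and $\rh$. For symmetry, I would observe that $E^{X}(x)=0$ whenever $x\notin\Sp(X)$, and likewise for $Y$, so the supremum in \eq{ID} is unchanged if the index set is enlarged from $\Sp(X)$ to all of $\R$. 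Therefore $\val{X=Y}=\Sup_{x\in\R}E^{X}(x)\And E^{Y}(x)=\val{Y=X}$; the two propositions have identical truth values, and in particular $X=_\rh Y$ implies $Y=_\rh X$.

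Transitivity is the only substantive step, and I expect it to be the main obstacle. Arguing directly from \eq{ID} is awkward: the hypotheses $X=_\rh Y$ and $Y=_\rh Z$ supply joint probability distributions for the pairs $(X,Y)$ and $(Y,Z)$ separately, by Theorem \ref{th:ID}, but no joint distribution for the triple $(X,Y,Z)$, which need not exist unless the three observables are jointly determinate in $\rh$. Hence the suprema cannot simply be chained through the intermediate observable $Y$. Instead I would invoke the operator characterization (iv) of Theorem \ref{th:ID'}, namely that $X=_\rh Y$ holds if and only if $f(X)\rh=f(Y)\rh$ for every function $f$. Granting this equivalence, transitivity becomes immediate: if $X=_\rh Y$ and $Y=_\rh Z$, then for every $f$ we have $f(X)\rh=f(Y)\rh=f(Z)\rh$, whence $f(X)\rh=f(Z)\rh$ for all $f$, and the converse direction of (iv) yields $X=_\rh Z$. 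Together with reflexivity and symmetry this shows that $=_\rh$ is an equivalence relation on $\cO(\cH)$.

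Thus the entire weight of the argument is carried by the equivalence of (i) and (iv) in Theorem \ref{th:ID'}: it is precisely the passage from the quantum-logical value-identity proposition to the state-dependent operator identity $f(X)\rh=f(Y)\rh$ that linearizes the relation and makes chaining through $Y$ legitimate, circumventing the need for a joint distribution of $X$, $Y$, and $Z$ that may not be available.
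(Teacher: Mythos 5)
Your proof is correct and follows essentially the same route as the paper, which simply cites Theorem \ref{th:ID'} together with Theorem 4.4 of Ref.~\cite{06QPC}: the transitivity you derive from the operator characterization (iv), $f(X)\rho=f(Y)\rho$ for all $f$, is precisely the mechanism behind that cited result. Your explicit verifications of reflexivity (via $\Sup_{x\in\Sp(X)}E^{X}(x)=I$) and symmetry (via vanishing of $E^{X}(x)\And E^{Y}(x)$ off the common spectrum) fill in details the paper leaves implicit, and are sound.
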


The following theorem follows from Theorem \ref{th:ID'} and 
Theorems 5.8 and 6.3 in Ref.~\cite{07TPQ}.

\begin{theorem}
Let $X_1,\ldots,X_n\in\cO(\cH)$.  We have
\beqa
\val{X_1=X_2\And\cdots\And X_{n-1}=X_n}\le
\val{\cm(X_1,\ldots,X_n)}.
\eeqa
In particular, for any $\rh\in\cS(\cH)$, 
if we have $X_1=_{\rh}X_2, \ldots,  X_{n-1}=_{\rh}X_n$, then  
$X_1,\ldots,X_n$ are jointly determinate in $\rh$.
\end{theorem}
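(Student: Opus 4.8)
The plan is to prove the projection inequality directly by a range argument and then read off the state-dependent ``in particular'' clause as a corollary. Writing $P_j:=\val{X_j=X_{j+1}}$, rule (T3) gives $\val{X_1=X_2\And\cdots\And X_{n-1}=X_n}=P_1\And\cdots\And P_{n-1}$, and since the meet of projections is the projection onto the intersection of their ranges, it suffices to show
$\bigcap_{j=1}^{n-1}\cR(P_j)\subseteq\cR(\val{\cm(X_1,\ldots,X_n)})$.

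First I would unfold each $P_j$. By the definition of $X_j=X_{j+1}$ together with (T3), (T4) we have $P_j=\Sup_{x\in\Sp(X_j)}\bigl(E^{X_j}(x)\And E^{X_{j+1}}(x)\bigr)$, a join of projections onto the common eigenspaces $\cR(E^{X_j}(x))\cap\cR(E^{X_{j+1}}(x))$ with \emph{matching} eigenvalue $x$. Because these summands sit inside the mutually orthogonal $X_j$-eigenspaces, they are pairwise orthogonal, so the join is an orthogonal direct sum and $P_j=\sum_x\bigl(E^{X_j}(x)\And E^{X_{j+1}}(x)\bigr)$. The key lemma I would then prove is that $\psi\in\cR(P_j)$ forces the pointwise identity $E^{X_j}(x)\psi=E^{X_{j+1}}(x)\psi$ for every $x\in\R$: decomposing $\psi=\sum_x\psi_x$ along the orthogonal summands, each $\psi_x$ lies in $\cR(E^{X_j}(x))\cap\cR(E^{X_{j+1}}(x))$, whence $E^{X_j}(x)\psi=\psi_x=E^{X_{j+1}}(x)\psi$.

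With this lemma the chaining is immediate. If $\psi\in\bigcap_j\cR(P_j)$, applying it for $j=1,\ldots,n-1$ yields $E^{X_1}(x)\psi=E^{X_2}(x)\psi=\cdots=E^{X_n}(x)\psi$ for every $x$; call this common vector $\psi_x$. Each $E^{X_j}(x)\psi$ lies in $\cR(E^{X_j}(x))$, so the single vector $\psi_x$ is a common eigenvector of $X_1,\ldots,X_n$ with eigenvalue $x$, and $\psi=\sum_x E^{X_1}(x)\psi=\sum_x\psi_x$ exhibits $\psi$ as a superposition of common eigenvectors. By (the range form of) Theorem \ref{th:SD} this gives $\psi\in\cR(\val{\cm(X_1,\ldots,X_n)})$, the desired inclusion. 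For the state clause, $X_j=_\rho X_{j+1}$ says $\Tr[P_j\rho]=1$, i.e.\ $\mathrm{supp}(\rho)\subseteq\cR(P_j)$; hence $\mathrm{supp}(\rho)\subseteq\bigcap_j\cR(P_j)=\cR(P_1\And\cdots\And P_{n-1})$, so $\Pr\{X_1=X_2\And\cdots\And X_{n-1}=X_n\|\rho\}=1$ and the inequality forces $\Pr\{\cm(X_1,\ldots,X_n)\|\rho\}=1$.

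I expect the main obstacle to be resisting the temptation to prove the inequality by distributing the meet of the $P_j$ over the joins defining them: the projection lattice $\cQ(\cH)$ is non-distributive, so that route is invalid. The device that sidesteps the difficulty is translating ``$\psi$ lies in $\cR(P_j)$'' into the \emph{linear}, pointwise spectral identities $E^{X_j}(x)\psi=E^{X_{j+1}}(x)\psi$, which chain transitively across $j$ and are wholly immune to the failure of distributivity. This is the concrete counterpart of the quantum-set-theoretic argument of the author, with the chained equalities playing the role of the internalized transitivity of value identity; verifying the orthogonal-direct-sum form of each $P_j$ is the one technical point that must be handled with care.
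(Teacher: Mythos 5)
Your proof is correct, and it takes a genuinely different route from the paper's. The paper does not argue inside the Hilbert space at all: it obtains the inequality from Theorem \ref{th:ID'} together with Theorems 5.8 and 6.3 of Ref.~\cite{07TPQ}, i.e.\ from the quantum-set-theoretic transfer principle and the internalized transitivity of equality for the real numbers in the lattice-valued universe. Your argument replaces that machinery with an elementary spectral computation: the observation that each $P_j=\val{X_j=X_{j+1}}$ is an \emph{orthogonal} join (because the summands $E^{X_j}(x)\And E^{X_{j+1}}(x)$ live in the mutually orthogonal eigenspaces of $X_j$), so that membership in $\cR(P_j)$ is equivalent to the pointwise identities $E^{X_j}(x)\psi=E^{X_{j+1}}(x)\psi$ for all $x$ --- including, as you implicitly need and as one should check, $x\notin\Sp(X_j)$, where both sides vanish because the components of $\psi$ sit in eigenspaces $\cR(E^{X_{j+1}}(x'))$ with $x'\in\Sp(X_j)$. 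These vector identities chain transitively over $j$, exhibit $\psi$ as a superposition of common eigenvectors $\psi_x\in\cR(E^{X_1}(x)\And\cdots\And E^{X_n}(x))$, and hence place $\psi$ in $\cR(\val{\cm(X_1,\ldots,X_n)})$ by the range form of Theorem \ref{th:SD}; the ``in particular'' clause then follows from $\Tr[P\rho]=1$ iff $\ran(\rho)\subseteq\cR(P)$. Your warning about non-distributivity is well taken: one cannot distribute the meet $P_1\And\cdots\And P_{n-1}$ over the joins defining the $P_j$ in $\cQ(\cH)$, and the pointwise spectral identities are exactly the device that makes the chaining legitimate --- they are the concrete counterpart of what the transfer principle delivers abstractly. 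What the paper's route buys is generality (it works verbatim in infinite dimensions and for arbitrarily many observables without re-examining convergence of the spectral sums); what yours buys is a self-contained proof readable without any quantum set theory.
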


\section{Measuring processes}
\label{se:5}

A {\em measuring process} for $\cH$ is defined to be a quadruple
$(\cK,\si,U,M)$ consisting of a Hilbert space $\cK$, 
a state $\si$ on $\cK$, 
a unitary operator $U$ on $\cH\otimes\cK$,
and an observable $M$ on $\cK$ \cite{84QC}.
A measuring process $\M(\bx)=(\cK,\si,U,M)$ with {\em output variable} $\bx$
describes a
measurement carried out by an interaction, called the {\em measuring interaction}, 
between the measured system
$\bS=\bS(\cH)$ described by $\cH$ 
and the {\em probe} system $\bP=\bS(\cK)$ described by $\cK$ 
that is prepared in the state $\si$ just before the measuring interaction.
The unitary operator $U$ describes the time evolution during the measuring 
interaction. The outcome of this measurement is obtained by
measuring the observable $M$, called the {\em meter observable},  
in the probe at the time just
after the measuring interaction.  Thus, the {\em output distribution}
$\Pr\{\bx=x\|\rho\}$,
the probability distribution 
of the output variable $\bx$ of this measurement on input state $\rh\in\cS(\cH)$, 
is naturally defined by
\beqa
\Pr\{\bx=x\|\rho\}=\Pr\{U^{\da}(I\otimes M)U=x\|\rho\otimes\si\}.
\eeqa
Then, we have 
\beqa
\Pr\{\bx=x\|\rho\}&=&\Pr\{I\otimes M=x\|U(\rho\otimes\si)U^{\da}\}\nonumber\\
&=&\Tr[(I\otimes E^{M}(x))U(\rho\otimes\si)U^{\da}].
\eeqa
Moreover, the {\em quantum state reduction}, the state change from the input
state $\rh$ to
the state $\rho_{\{\bx=x\}}$ just after the measurement 
under the condition
$\bx=x$, is given by
\beqa
\rho\mapsto\rho_{\{\bx=x\}}
=
\frac{
\Tr_{\cK}[(I\otimes
E^{M}(x))U(\rho\otimes\si)U^{\da}]}
{\Tr[(I\otimes E^{M}(x))U(\rho\otimes\si)U^{\da}]},
\eeqa
where $\Tr_{\cK}$ stands for the partial trace over $\cK$,
provided that $\Pr\{\bx=x\|\rho\}>0$.

An {\em instrument} is a mapping $\cI$ from $\R$ to the space ${\rm CP}(\cH)$
of completely positive maps on the space $\cL(\cH)$;
for general theory of instruments see Ref.~\cite{84QC,DL70}. 
Let $\M(\bx)=(\cK,\si,U,M)$ be a measuring process with output variable $\bx$.
The {\em instrument} of $\M(\bx)$ is defined by
\beqa\label{eq:instrument}
\cI(x)\rh
=\Tr_{\cK}[(I\otimes E^{M}(x))U(\rh\otimes\si)U^{\da}].
\eeqa 
The POVM of  $\M(\bx)$ is defined by
\beqa\label{eq:POVM}
\Pi(x)
=\Tr_{\cK}[U^{\da}(I\otimes E^{M}(x))U(I\otimes\si)].
\eeqa 
They satisfy the relations \cite{84QC}
\beqa
\Pr\{\bx=x\|\rh\}&=&\Tr[\cI(x)\rho]=\Tr[\Pi(x)\rho],\\
\rho_{\{\bx=x\}}&=&
\frac{\cI(x)\rho}{\Tr[\cI(x)\rho]}.
\eeqa

The dual map $T^{*}$ of $T\in {\rm CP}(\cH)$ is defined by 
$\Tr[(T^{*}a)\rh]=\Tr[a(T\rh)]$.
The POVM $\Pi$ and the instrument $\cI$ are related by
 the relation \cite{DL70}
\beqa\label{eq:P-I}
\Pi(x)=\cI(x)^{*}1,
\eeqa
where $\cI(x)^{*}$ is the dual map of $\cI(x)$.
Conversely, it has been proved in Ref.~\cite{84QC} that
for every POVM $\Pi$ there exists an instrument $\cI$
satisfying \Eq{P-I}, and for every instrument $\cI$ there is a measuring
process $(\cK,\si,U,M)$ satisfying \Eq{instrument}.
For further information on the theory of quantum measurements based on
the notions of measuring processes, instruments, and POVMs, we refer 
the reader to Refs.~\cite{DL70,Dav76,84QC,04URN}.

\section{Measurements of observables}
\label{se:6}

A measuring process $\M(\bx)=(\cK,\si,U,M)$ 
with output variable $\bx$ is 
said to be a {\em measurement of an observable $A$ in a state $\rho$},
or said to {\em measure $A$ in $\rho$},
if $A\otimes I$ 
and $U^{\da}(I\otimes M)U$ has the same value in the state $\rho\otimes\si$.
In the measuring process $\M(\bx)$, 
the observer actually measures the meter observable 
just after the measuring interaction, 
but reports the value of this observable as the value of the observable 
$A$ just before the measuring interaction;
or in other words the observer actually measures $U^{\da}(I\otimes M)U$ in
the state $\rho\otimes\si$, but reports the outcome to be the value of 
the observable $A\otimes I$  in the state $\rho\otimes\si$.
Thus, this procedure is justified if and only if 
$A\otimes I$ and $U^{\da}(I\otimes M)U$ has the same value 
in the state $\rho\otimes\si$.

A measuring process $\M(\bx)=(\cK,\si,U,M)$ is 
said to satisfy the {\em Born statistical formula} (BSF)  for $A\in\cO(\cH)$
in $\rho\in\cS(\cH)$ if it satisfies
$
\Pr\{\bx=x\|\rho\}=\Tr[E^{A}(x)\rho]
$
for all $x\in\R$.
A POVM $\Pi$ is said to be {\em identically correlated} with
an observable $A$ in $\rho$ if
$
\Tr[\Pi(x)E^{A}(y)\rho]=0
$
for any $x,y\in\R$ with $x\ne y$.

The following theorem characterizes measurements 
of an observable in a given state.

\begin{theorem}
Let $\M(\bx)=(\cK,\si,U,M)$ be a measuring process for $\cH$.
Let $\Pi$ be the POVM of $\M(\bx)$.
For any observable $A$ and any state $\rh$,
the following conditions are all equivalent.

(i) $\M(\bx)$ is a measurement of $A$ in $\rho$.

(ii) $\M(\bx)$ satisfies the BSF for $A$ in any $\psi\in\cC(A,\rho)$.

(iii) $\Pi$ is identically correlated with $A$ in $\rho$.

\end{theorem}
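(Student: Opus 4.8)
The plan is to reduce the whole statement to a single application of Theorem~\ref{th:ID'} on the tensor product space $\cH\otimes\cK$. By definition, condition (i) asserts exactly that $A\otimes I=_{\rho\otimes\si}U^{\da}(I\otimes M)U$, i.e.\ that $X:=A\otimes I$ and $Y:=U^{\da}(I\otimes M)U$ have the same value in $\rho\otimes\si$, where $E^{X}(x)=E^{A}(x)\otimes I$ and $E^{Y}(x)=U^{\da}(I\otimes E^{M}(x))U$. First I would record the factorization $\cC(A\otimes I,\rho\otimes\si)=\cC(A,\rho)\otimes\cR(\si)$, which holds because $(A\otimes I)^{n}(\ps\otimes\ph)=A^{n}\ps\otimes\ph$ and $\cR(\rho\otimes\si)=\cR(\rho)\otimes\cR(\si)$. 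With this in hand, Theorem~\ref{th:ID'} tells us that (i) is equivalent both to $X,Y$ being identically correlated in $\rho\otimes\si$ and to $X,Y$ being identically distributed in every vector of $\cC(A,\rho)\otimes\cR(\si)$, and the theorem reduces to translating these two forms into statements about $\Pi$ and the output distribution.

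For (i)$\Leftrightarrow$(iii) I would carry out a direct trace computation. Starting from $\Pi(x)=\Tr_{\cK}[U^{\da}(I\otimes E^{M}(x))U(I\otimes\si)]$ and using the partial-trace identity $\Tr_{\cK}[Z(B\otimes I)]=\Tr_{\cK}[Z]\,B$ together with cyclicity of the full trace, one gets
\beqas
\Tr[\Pi(x)E^{A}(y)\rho]
&=&\Tr[U^{\da}(I\otimes E^{M}(x))U(E^{A}(y)\rho\otimes\si)].
\eeqas
Taking adjoints inside the trace exhibits the right-hand side as the complex conjugate of $\Tr[(E^{A}(y)\otimes I)U^{\da}(I\otimes E^{M}(x))U(\rho\otimes\si)]=\Tr[E^{X}(y)E^{Y}(x)(\rho\otimes\si)]$. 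Hence $\Pi$ is identically correlated with $A$ in $\rho$ if and only if $X,Y$ are identically correlated in $\rho\otimes\si$, which by Theorem~\ref{th:ID'} is exactly (i).

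To close the loop I would prove (i)$\Rightarrow$(ii) and (ii)$\Rightarrow$(iii). For (i)$\Rightarrow$(ii), fix $\ps\in\cC(A,\rho)$ and write $\si=\sum_{k}p_{k}\ketbra{\ph_{k}}$ with $\ph_{k}\in\cR(\si)$. Each $\ps\otimes\ph_{k}$ lies in $\cC(A,\rho)\otimes\cR(\si)$, so the identically-distributed form of (i) gives $\braket{\ps|E^{A}(x)|\ps}=\braket{\ps\otimes\ph_{k}|U^{\da}(I\otimes E^{M}(x))U|\ps\otimes\ph_{k}}$ for each $k$, and the $p_{k}$-weighted sum is precisely the BSF identity $\Tr[E^{A}(x)\ketbra{\ps}]=\Pr\{\bx=x\|\ps\}$. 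For (ii)$\Rightarrow$(iii), I would use that $\cC(A,\rho)$ is invariant under $E^{A}(y)$ (a polynomial in $A$ in finite dimensions), so $E^{A}(y)\ps\in\cC(A,\rho)$; applying BSF at $E^{A}(y)\ps$ and using $E^{A}(x)E^{A}(y)=0$ for $x\ne y$ yields $\braket{E^{A}(y)\ps|\Pi(x)|E^{A}(y)\ps}=0$. Positivity of $\Pi(x)$ then forces $\Pi(x)E^{A}(y)\ps=0$, and summing over an eigenbasis of $\rho$ inside $\cR(\rho)\subseteq\cC(A,\rho)$ gives $\Tr[\Pi(x)E^{A}(y)\rho]=0$ for $x\ne y$. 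Together with (i)$\Leftrightarrow$(iii) this yields all equivalences.

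The main obstacle is the implication (ii)$\Rightarrow$(iii): condition (ii) is a family of scalar expectation-value identities, while (iii) is an off-diagonal operator condition, so the passage is not a mere averaging. The decisive device is to evaluate BSF not at $\ps$ but at the projected vector $E^{A}(y)\ps$ and then invoke the positive-semidefiniteness of the POVM element $\Pi(x)$, which upgrades the vanishing of a single diagonal matrix element to the vanishing of $\Pi(x)E^{A}(y)\ps$ itself. A secondary point needing care is that it is $\cR(\si)$, not all of $\cK$, that enters the cyclic-subspace factorization, so that the weighted sum over the eigenvectors of $\si$ reproduces the probe-averaged output distribution correctly.
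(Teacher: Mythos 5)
Your proof is correct, and it takes a genuinely different route from the paper: the paper's entire proof is a citation to Theorem 8.2 of Ref.~\cite{06QPC}, whereas you reconstruct that result internally from Theorem~\ref{th:ID'} applied on $\cH\otimes\cK$. The key reductions all check out: the factorization $\cC(A\otimes I,\rho\otimes\si)=\cC(A,\rho)\otimes\cR(\si)$ is valid because $\cR(\rho\otimes\si)=\cR(\rho)\otimes\cR(\si)$ is spanned by product vectors; the partial-trace/adjoint computation correctly identifies $\Tr[\Pi(x)E^{A}(y)\rho]$ with the complex conjugate of $\Tr[(E^{A}(y)\otimes I)\,U^{\da}(I\otimes E^{M}(x))U\,(\rho\otimes\si)]$, so (iii) is literally the identical-correlation condition of Theorem~\ref{th:ID'} for $A\otimes I$ and $U^{\da}(I\otimes M)U$; and the two closing implications are sound, with the decisive step in (ii)$\Rightarrow$(iii) — evaluating the BSF at $E^{A}(y)\ps$ and using positivity of $\Pi(x)$ to upgrade $\braket{E^{A}(y)\ps|\Pi(x)|E^{A}(y)\ps}=0$ to $\Pi(x)E^{A}(y)\ps=0$ — being exactly the right device (with the trivial caveat that one normalizes $E^{A}(y)\ps$ when it is nonzero). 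What your version buys is a self-contained argument depending only on results already stated in this paper, at the cost of still leaning on Theorem~\ref{th:ID'}, whose proof is itself deferred to Ref.~\cite{06QPC}; the paper's one-line citation simply outsources the whole equivalence to the same external source.
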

\begin{proof}
The assertion follows from Theorem 8.2 of Ref.~\cite{06QPC}.
\end{proof}

\begin{theorem}
A measuring process $\M(\bx)=(\cK,\si,U,M)$ for $\cH$ is a measurement of 
an observable $A\in\cO(\cH)$ in any state if and only if its POVM
$\Pi$ coincides with the spectral measure of $A$, i.e., $\Pi=E^{A}$.
\end{theorem}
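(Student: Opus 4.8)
The plan is to reduce everything to the preceding theorem, whose condition (iii) states that a measuring process $\M(\bx)$ with POVM $\Pi$ measures $A$ in a state $\rho$ precisely when $\Tr[\Pi(x)E^{A}(y)\rho]=0$ for all $x\ne y$. Thus ``$\M(\bx)$ is a measurement of $A$ in \emph{any} state'' is equivalent to the single family of conditions
\[
\Tr[\Pi(x)E^{A}(y)\rho]=0 \quad\text{for all } x\ne y \text{ and all } \rho\in\cS(\cH).
\]
The easy direction is then immediate: if $\Pi=E^{A}$, the mutual orthogonality of the spectral projections gives $E^{A}(x)E^{A}(y)=0$ for $x\ne y$, so the trace above vanishes in every state and $\M(\bx)$ measures $A$ in any state.

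For the converse I would first upgrade the state-indexed trace conditions to an operator identity. Since the density operators include all rank-one projections $\ketbra{\psi}$, the vanishing of $\bra{\psi}\Pi(x)E^{A}(y)\ket{\psi}$ for every $\psi$ forces, over a complex Hilbert space (by polarization), the operator identity $\Pi(x)E^{A}(y)=0$ for all $x\ne y$. Taking adjoints and using that the $\Pi(x)$ and $E^{A}(y)$ are self-adjoint yields $E^{A}(y)\Pi(x)=0$ as well. Inserting the resolution of the identity $\sum_{y}E^{A}(y)=I$ on either side then gives
\[
\Pi(x)=\Pi(x)\sum_{y}E^{A}(y)=\Pi(x)E^{A}(x),\qquad \Pi(x)=\sum_{y}E^{A}(y)\,\Pi(x)=E^{A}(x)\Pi(x),
\]
so that $\Pi(x)=E^{A}(x)\Pi(x)E^{A}(x)$, i.e.\ each $\Pi(x)$ is supported on the range of $E^{A}(x)$.

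The final and most delicate step is to promote this support condition to genuine equality, and here the positivity and normalization of the POVM do the work. From $0\le\Pi(x)\le I$ and $\Pi(x)=E^{A}(x)\Pi(x)E^{A}(x)$ one gets $\Pi(x)\le E^{A}(x)I E^{A}(x)=E^{A}(x)$, hence $E^{A}(x)-\Pi(x)\ge 0$ for every $x$. Summing over $x$ and using both completeness relations $\sum_{x}\Pi(x)=I$ and $\sum_{x}E^{A}(x)=I$ yields $\sum_{x}\bigl(E^{A}(x)-\Pi(x)\bigr)=0$, a sum of positive operators equal to zero, which forces each summand to vanish. Therefore $\Pi(x)=E^{A}(x)$ for all $x$, i.e.\ $\Pi=E^{A}$. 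I expect the positivity-and-normalization squeeze of this last paragraph to be the substantive point: the operator identity alone only bounds $\Pi(x)$ below $E^{A}(x)$, and it is the interplay of the two completeness relations with positivity that pins down equality.
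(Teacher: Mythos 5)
Your proposal is correct and follows essentially the same route as the paper: the paper's own proof reduces the claim, via condition (iii) of the preceding theorem, to the fact that $\Pi$ is identically correlated with $A$ in every state if and only if $\Pi=E^{A}$, which is exactly your reformulation. The only difference is that the paper cites this fact as immediate, whereas you supply its proof in full (polarization to get $\Pi(x)E^{A}(y)=0$ for $x\ne y$, then the positivity-and-normalization squeeze $\Pi(x)\le E^{A}(x)$ with $\sum_x\bigl(E^{A}(x)-\Pi(x)\bigr)=0$), and that argument is sound.
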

\begin{proof}
The assertion follows immediately from the fact that 
$\Pi$ is identically correlated with $A$ in any state
if and  only if $\Pi(x)=E^{A}(x)$ for all $x\in\R$.
\end{proof}

\section{Simultaneous measurability}
\label{se:7}

For any measuring process $\M(\bx)=(\cK,\si,U,M)$ for $\cH$ with output variable 
$\bx$ and a real function $f$, the measuring process $\M(f(\bx))$ with output variable 
$f(\bx)$ is defined by $\M(f(\bx))=(\cK,\si,U,f(M))$.  Observables $A_1,\ldots,A_n$ are 
said to be {\em simultaneously measurable} in a state $\rh\in\cS(\cH)$
by $\M(\bx)$ if there are real functions $f_1,\ldots,f_n$ such that
$\M(f_j(\bx))$ measures $A_j$ in $\rh$ for $j=1,\ldots,n$. 
Observables $A_1,\ldots,A_n$ are said to be {\em simultaneously measurable}
 in $\rh$ if there is a measuring process $\M(\bx)$ such that 
$A_1,\ldots,A_n$ are simultaneously measurable in $\rh$ by $\M(\bx)$.
 
The cyclic subspace $\cC(A,B,\rho)$ generated by $A,B,\rho$ is
defined by 
\beqa
\lefteqn{
\cC(A,B,\rho)}\quad\nonumber\\
&=&\{p(A,B)\ps\mid \mb{$p(A,B)$ is a polynomial in $A,B$
and $\ps\in\ran(\rho)$}\}^{\perp\perp}.
\eeqa
The simultaneous measurability and the commutativity are not
equivalent notion under the state-dependent formulation, as
the following theorem clarifies.

\begin{theorem}\label{th:main}
(i) Two observables $A,B\in\cO(\cH)$ are jointly determinate in a state $\rh\in\cS(\cH)$ if and only if
there is an $\R^{2}$-valued POVM $\Pi$ such that 
\beqa
\sum_y\Pi(x,y)&=&E^{A}(x)\quad\mbox{on}\quad\cC(A,B,\rho),
\label{eq:i-1}\\
\sum_x\Pi(x,y)&=&E^{B}(y)\quad\mbox{on}\quad\cC(A,B,\rho).
\label{eq:i-2}
\eeqa

(ii) Two observables $A,B\in\cO(\cH)$ are
simultaneously measurable in a state $\rh\in\cS(\cH)$
 if and only
if there is an $\R^{2}$-valued POVM $\Pi$ such that 
\beqa
\sum_y\Pi(x,y)&=&E^{A}(x)\quad\mbox{on}\quad\cC(A,\rho),
\label{eq:ii-1}\\
\sum_x\Pi(x,y)&=&E^{B}(y)\quad\mbox{on}\quad\cC(B,\rho).
\label{eq:ii-2}
\eeqa

(iii) Two observables on $\cH$ are
simultaneously measurable in a state $\rh\in\cS(\cH)$
if they are jointly determinate in $\rh$. 
\end{theorem}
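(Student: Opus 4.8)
The plan is to treat (i) and (ii) as the substantive statements and then read off (iii) as an immediate corollary. Since every polynomial in $A$ alone, or in $B$ alone, is in particular a polynomial in $A,B$, we have $\cC(A,\rho)\subseteq\cC(A,B,\rho)$ and $\cC(B,\rho)\subseteq\cC(A,B,\rho)$. Hence any $\R^2$-valued POVM satisfying \eq{i-1}--\eq{i-2} on $\cC(A,B,\rho)$ automatically satisfies \eq{ii-1}--\eq{ii-2} on the smaller subspaces. Thus (i) manufactures, out of joint determinateness, exactly the POVM required by the ``if'' part of (ii), and (iii) follows at once.

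For (i), write $P=\cP(\cC(A,B,\rho))$ and let $N=\cR(\val{\cm(A,B)})$ be the span of the common eigenvectors of $A,B$. In the forward direction, joint determinateness means $\Tr[\val{\cm(A,B)}\rho]=1$, i.e.\ $\ran(\rho)\subseteq N$; since $N$ is invariant under $A$ and $B$ (Theorem~\ref{th:SD}), also $\cC(A,B,\rho)\subseteq N$, so $E^A(x)$ and $E^B(y)$ commute on $\cC(A,B,\rho)$. I would then set $\Pi(x,y)=E^A(x)\And E^B(y)$ for $(x,y)\in\Sp(A)\times\Sp(B)$ and assign the defect $I-\sum_{x,y}(E^A(x)\And E^B(y))$ to a single auxiliary outcome; on $\cC(A,B,\rho)$ meets equal products, so the marginals collapse to $E^A(x)$ and $E^B(y)$, giving \eq{i-1}--\eq{i-2}. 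For the converse I would compress a given $\Pi$ to $\cC(A,B,\rho)$, producing an $\R^2$-valued POVM $\tilde\Pi(x,y)=P\Pi(x,y)P$ on $\cC(A,B,\rho)$ whose two marginals are, by hypothesis, the projection-valued measures $E^A(x)P$ and $E^B(y)P$ (here $E^A(x),E^B(y)$ commute with $P$ because the cyclic subspace is invariant under them). The key lemma is that an $\R^2$-valued POVM both of whose marginals are projection valued must have commuting marginals with $\tilde\Pi(x,y)=E^A(x)E^B(y)$; this makes $A,B$ commute on $\cC(A,B,\rho)$, so $E^A(x)\And E^B(y)=E^A(x)E^B(y)$ there and $\sum_{x,y}\Tr[(E^A(x)\And E^B(y))\rho]=\Tr[\rho]=1$. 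By Theorem~\ref{th:JPD} this is precisely joint determinateness.

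For (ii) the organizing principle is the first theorem of Section~\ref{se:6}: $\M(f(\bx))$ measures $A$ in $\rho$ iff its POVM is identically correlated with $A$ in $\rho$ iff it satisfies the BSF for $A$ on all of $\cC(A,\rho)$, which I would upgrade to the operator identity ``its POVM equals $E^A$ on $\cC(A,\rho)$'': the BSF gives $\braket{\psi|\Pi(x)|\psi}=\braket{\psi|E^A(x)|\psi}$ for $\psi\in\cC(A,\rho)$, and since the compressed POVM is then projection valued, the elementary implication $0\le T\le I,\ \braket{\psi|T|\psi}=\|\psi\|^2\Rightarrow T\psi=\psi$ promotes this to $\Pi(x)\psi=E^A(x)\psi$ on $\cC(A,\rho)$. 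For the ``only if'' direction, given $\M(\bx)$ and functions $f,g$ witnessing simultaneous measurability, I would form the $\R^2$-valued POVM $\Pi(x,y)=\Pi_0(\{m:f(m)=x,\ g(m)=y\})$ from the POVM $\Pi_0$ of $\M(\bx)$; its marginals are the POVMs of $\M(f(\bx))$ and $\M(g(\bx))$, which equal $E^A$ on $\cC(A,\rho)$ and $E^B$ on $\cC(B,\rho)$, yielding \eq{ii-1}--\eq{ii-2}. Conversely, given such a $\Pi$, I would realize it by a measuring process with $\R^2$-valued output (the realization result recalled in Section~\ref{se:5}, encoding the finitely many outcomes into a single meter) and take $f,g$ to be the coordinate projections; \eq{ii-1}--\eq{ii-2} then say exactly that $\M(f(\bx))$ and $\M(g(\bx))$ measure $A$ and $B$ in $\rho$.

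The main obstacle is the marginal-commutativity lemma underlying the converse of (i): that a two-dimensional POVM with both marginals sharp is forced to be the product of two commuting sharp measurements. I would prove it from the observation that $0\le\Pi(x,y)\le\sum_{y'}\Pi(x,y')=E^A(x)$ with $E^A(x)$ a projection gives $E^A(x)\Pi(x,y)=\Pi(x,y)=\Pi(x,y)E^A(x)$, and symmetrically with $E^B(y)$; feeding these back into the marginal sums and using $E^A(x)E^A(x')=\delta_{xx'}E^A(x)$ yields $\Pi(x,y)=E^A(x)E^B(y)=E^B(y)E^A(x)$. The only remaining subtlety is bookkeeping at the level of subspaces---ensuring the compressions to $\cC(A,B,\rho)$, $\cC(A,\rho)$, $\cC(B,\rho)$ preserve both positivity and the marginal identities, which holds because these cyclic subspaces are invariant under the relevant spectral projections.
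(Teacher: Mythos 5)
Your proposal is correct and follows essentially the same route as the paper: compress the POVM to the relevant cyclic subspace, use the fact that a bivariate POVM with projection-valued marginals is the commuting product of its marginals (the paper cites Davies, Theorem 3.2.1, where you prove it inline from $0\le\Pi(x,y)\le E^{A}(x)$), build $\Pi$ from the meets $E^{A}(x)\And E^{B}(y)$ in the forward direction of (i), and use the realization theorem together with the coarse-grainings $\M(f(\bx))$, $\M(g(\bx))$ for (ii), with (iii) read off from the inclusions $\cC(A,\rho),\cC(B,\rho)\subseteq\cC(A,B,\rho)$.
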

\begin{proof}
Suppose that there is an $\R^{2}$-valued POVM $\Pi$ satisfying 
Eqs.~\eq{i-1} and \eq{i-2}.
Let $P=\cP(\cC(A,B,\rh))$. 
Let $\Pi'$ be such that $\Pi'(x,y)=P\Pi(x,y)P$.  
Then, marginals of $\Pi'$ are projection-valued measures
$E^{A}(x)P$ and $E^{B}(y)P$,
so that from a well-know theorem, e.g., Theorem 3.2.1 of Ref.~\cite{Dav76}, 
the marginals commute and $\Pi'$ is the product of the marginals.
Thus, we have
\beqas
P\Pi(x,y)P&=&\Pi'(x,y)=E^{A}(x)PE^{B}(y)P=
(E^{A}(x)P\And E^{B}(y)P)\\
&=&
(E^{A}(x)\And E^{B}(y))P
\eeqas
Let $\mu(x,y)=\Tr[\Pi(x,y)\rh]$.  
Since $\Pi$ is  a POVM, $\mu$ is a probability distribution.
Since $P\rh=\rh P=\rh$, we have 
$
\mu(x,y)=\Tr[(E^{A}(x)\And E^{B}(y))\rh],
$
so that $A$ and $B$ has the joint probability distribution in $\rh$
and they are jointly determinate.
Conversely, suppose that $A$ and $B$ are jointly determinate in a state $\rh$.
Then, $\ran{\rh}\subseteq\cR(\val{\cm(A,B)})$ and $\cR(\val{\cm(A,B)})$
is both $A$-invariant and $B$-invariant, so that we have 
$\cC(A,B,\rho)\subseteq\cR(\val{\cm(A,B)})$.
Then, we have $AB=BA$ on $\cC(A,B,\rho)$ and 
$\Pi'(x,y)=(E^{A}(x)\And E^{B}(y))P$ can be extended to a POVM $\Pi$
satisfying Eqs.~\eq{i-1} and \eq{i-2}.
Thus, statement (i) follows.
Suppose that $A,B\in\cO(\cH)$ are
simultaneously measurable in $\rh\in\cS(\cH)$.
Then, we have a measuring process $\M(\bx)=(\cK,\si,U,M)$ and real functions
$f,g$ such that $\M(f(\bx))$ measures $A$ in $\rh$ and 
$\M(g(\bx))$ measures $B$ in $\rh$.
Let $\Pi_0(x)$ be the POVM of  $\M(\bx)$.
Let $\Pi(x,y)=\sum_{x':(x,y)=(f(x'),g(x'))}\Pi_0(x')$.
Then, it is easy to see that $\Pi$ satisfies Eqs.~\eq{ii-1} and \eq{ii-2}.
Conversely, suppose that there is 
an $\R^{2}$-valued POVM $\Pi$ satisfying Eqs.~\eq{ii-1} and \eq{ii-2}.
Then, from the realization theorem of instruments and POVMs,
Theorem 5.1 of Ref.~\cite{84QC}, 
we have a measuring process 
$\M(\bx)=(\cK,\si,U,M)$ and real functions
$f,g$ such that 
\beqa
\Pi(x,y)=\Tr_{\cK}
[U^{\da}(I\otimes \sum_{x':(x,y)=(f(x'),g(x'))}E^{M}(x'))U(I\otimes \si)].
\eeqa
Then, we have 
\beqa
\sum_{y}\Pi(x,y)
=\Tr_{\cK}
[U^{\da}(I\otimes E^{f(M)}(x))U(I\otimes \si)],
\eeqa
so that from \Eq{ii-1} we have $\M(f(\bx))$ measures $A$ in $\rh$.
Similarly, we can show that $\M(g(\bx))$ measures $B$ in $\rh$.
Thus, statement (ii) follows.
Now, statement (iii) follows (i) and (ii)
\end{proof}

The conventional relation between simultaneous measurability and
commutativity, or joint determinateness,  
in the sate-independent formulation 
is recovered in our quantum logical approach as follows.

\begin{theorem}
For any observables $A,B\in\cO(\cH)$,
the following conditions are all equivalent.

(i) $A$ and $B$ are jointly determinate in any state $\rh\in\cS(\cH)$.

(ii) $A$ and $B$ are simultaneously measurable in any state $\rh\in\cS(\cH)$.

(iii) $A$ and $B$ commute on $\cH$.
\end{theorem}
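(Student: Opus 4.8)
I would prove the three conditions equivalent by establishing the cycle (iii) $\Rightarrow$ (i) $\Rightarrow$ (ii) $\Rightarrow$ (iii). Two of the three links are essentially immediate consequences of results already in hand, so the real work is concentrated in (ii) $\Rightarrow$ (iii); the crucial idea there is to probe the hypothesis of simultaneous measurability with a \emph{faithful} state, which collapses the relevant cyclic subspaces onto the whole space $\cH$.

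For (iii) $\Rightarrow$ (i) I would argue directly at the level of truth values. If $A$ and $B$ commute on $\cH$, then $E^{A}(x)\And E^{B}(y)=E^{A}(x)E^{B}(y)$, and as $x$ ranges over $\Sp(A)$ and $y$ over $\Sp(B)$ these projections are mutually orthogonal and satisfy $\sum_{x,y}E^{A}(x)E^{B}(y)=\left(\sum_x E^{A}(x)\right)\left(\sum_y E^{B}(y)\right)=I$. Hence their join is $I$, so $\val{\cm(A,B)}=I$ and $\Pr\{\cm(A,B)\|\rh\}=\Tr[\rh]=1$ for every state $\rh$; that is, $A$ and $B$ are jointly determinate in any state. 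The link (i) $\Rightarrow$ (ii) needs no new computation: by Theorem \ref{th:main}(iii) joint determinateness in a state $\rh$ already implies simultaneous measurability in $\rh$, and quantifying this over all $\rh$ gives (ii).

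The step (ii) $\Rightarrow$ (iii) is where I expect the main obstacle, because simultaneous measurability in a state $\rh$ is only an assertion about the marginals of a POVM on the cyclic subspaces $\cC(A,\rh)$ and $\cC(B,\rh)$, and for a generic $\rh$ these are proper subspaces of $\cH$ on which little can be concluded. To remove this limitation I would specialize to the maximally mixed state $\rh=I/\dim\cH$, which is faithful, so that $\ran(\rh)=\cH$ and therefore $\cC(A,\rh)=\cC(B,\rh)=\cH$. Applying Theorem \ref{th:main}(ii) to this particular $\rh$ produces an $\R^{2}$-valued POVM $\Pi$ whose marginals $\sum_y\Pi(x,y)$ and $\sum_x\Pi(x,y)$ equal $E^{A}(x)$ and $E^{B}(y)$ on all of $\cH$. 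Both marginals are then projection-valued measures, so by the commuting-marginals theorem already used in the proof of Theorem \ref{th:main} (Theorem 3.2.1 of Ref.~\cite{Dav76}) they commute, giving $E^{A}(x)E^{B}(y)=E^{B}(y)E^{A}(x)$ for all $x,y$ and hence $AB=BA$ on $\cH$. This yields (iii) and closes the cycle.
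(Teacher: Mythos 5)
Your proof is correct and follows essentially the same route as the paper: both arguments reduce everything to the existence of an $\R^{2}$-valued POVM whose marginals equal $E^{A}$ and $E^{B}$ on all of $\cH$ via Theorem \ref{th:main}, and then invoke the commuting-marginals theorem (Theorem 3.2.1 of Davies) to pass to and from commutativity. Your explicit use of a faithful state (the maximally mixed state) to force $\cC(A,\rho)=\cC(B,\rho)=\cH$ correctly supplies a detail the paper's terse proof leaves implicit.
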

\begin{proof}
From Theorem \ref{th:main} conditions (i) and (ii) are both equivalent to that
there is an $\R^{2}$-valued POVM $\Pi$ such that their marginals are
equal to $E^{A}$ and $E^{B}$.  This condition is well-known to be 
equivalent to condition (iii); see for example Theorem 3.2.1 of Ref.~\cite{Dav76}.
\end{proof}

An observable is said to be {\em non-degenerate} if every eigenvalue has 
one-dimensional eigenspace.  In the case where $\dim(\cH)=2$, every 
observable is non-degenerate or scalar.

\begin{theorem}\label{th:12}
Suppose that  $\dim(\cH)=2$.
For any non-degenerate observables $A,B\in\cO(\cH)$ and any state $\rh\in\cS(\cH)$
that is not an eigenstate of $A$ or $B$,
the following conditions are all equivalent.

(i) $A$ and $B$ are jointly determinate in $\rh$.

(ii) $A$ and $B$ are simultaneously measurable in $\rh$.

(iii) $A$ and $B$ commute on $\cH$.
\end{theorem}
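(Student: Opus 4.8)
The plan is to collapse all three conditions, by way of Theorem~\ref{th:main}, to the single requirement that an $\R^{2}$-valued POVM with marginals $E^{A}$ and $E^{B}$ exist on all of $\cH$. Two of the implications are already in hand. The implication (iii)$\,\Rightarrow\,$(i) follows from the preceding theorem, since commutativity of $A$ and $B$ on $\cH$ yields joint determinateness in \emph{every} state and hence in $\rho$; and (i)$\,\Rightarrow\,$(ii) is precisely Theorem~\ref{th:main}(iii). So the real work is to close the cycle with (ii)$\,\Rightarrow\,$(iii), and the mechanism will be that in a two-dimensional space the localizing subspaces in Theorem~\ref{th:main} all collapse to $\cH$.

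First I would compute the cyclic subspaces, proving that under the hypotheses $\cC(A,\rho)=\cC(B,\rho)=\cC(A,B,\rho)=\cH$. If $\rho$ has rank two then $\ran(\rho)=\cH$ and this is immediate. If $\rho=\ketbra{\psi}$ is pure, then by hypothesis $\psi$ is not an eigenvector of $A$; expanding $\psi=\alpha u_{1}+\beta u_{2}$ in the one-dimensional eigenspaces of the non-degenerate observable $A$ (with eigenvalues $\lambda_{1}\neq\lambda_{2}$) forces $\alpha\neq 0$ and $\beta\neq 0$, and a short computation then shows that $\psi$ and $A\psi$ are linearly independent. Since they already span $\cH$, we get $\cC(A,\rho)=\cH$. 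The identical argument gives $\cC(B,\rho)=\cH$, and $\cC(A,B,\rho)\supseteq\cC(A,\rho)=\cH$ forces the remaining equality.

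With the subspaces identified, every clause ``on $\cC(A,\rho)$'', ``on $\cC(B,\rho)$'', and ``on $\cC(A,B,\rho)$'' appearing in Eqs.~\eq{i-1}--\eq{ii-2} becomes simply ``on $\cH$''. Consequently the conditions of Theorem~\ref{th:main}(i) and of Theorem~\ref{th:main}(ii) become \emph{one and the same} statement, namely the existence of an $\R^{2}$-valued POVM $\Pi$ whose marginals equal the spectral measures $E^{A}$ and $E^{B}$ throughout $\cH$. This at once yields (i)$\,\Leftrightarrow\,$(ii) and reduces both to this marginal-realizability condition. Finally I would apply the standard fact (Theorem~3.2.1 of Ref.~\cite{Dav76}, already invoked in the preceding theorem) that a joint POVM with projection-valued marginals exists if and only if those marginals commute; here the non-degeneracy of $A$ and $B$ ensures that $E^{A}$ and $E^{B}$ are genuinely projection-valued, so this condition is equivalent to $A$ and $B$ commuting on $\cH$, giving (i)/(ii)$\,\Leftrightarrow\,$(iii).

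The main obstacle is the cyclic-subspace computation, and it is exactly here that the full force of the hypotheses enters: the assumption that $\rho$ is not an eigenstate of $A$ or $B$ is what guarantees that $\ran(\rho)$ is not contained in a one-dimensional eigenspace, and hence (with $\dim\cH=2$ and non-degeneracy) that $\psi$ and $A\psi$ are independent. Were this to fail, the marginals would be pinned down only on a proper subspace, and condition (ii) could be satisfied by non-commuting $A$ and $B$, so that the equivalence with (iii) would break down.
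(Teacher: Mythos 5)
Your proposal is correct and follows essentially the same route as the paper, whose entire proof is the observation that $\cC(A,\rho)=\cC(B,\rho)=\cC(A,B,\rho)=\cH$ under the stated hypotheses, after which the conditions of Theorem~\ref{th:main} collapse to the state-independent marginal condition handled by Theorem~3.2.1 of Ref.~\cite{Dav76}. You have merely filled in the details the paper leaves as ``the assertion follows easily,'' and your filling-in (linear independence of $\psi$ and $A\psi$ from non-degeneracy plus the non-eigenstate hypothesis) is the intended one.
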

\begin{proof}
In this case, we have $\cC(A,\rho)=\cC(B,\rh)=\cC(A,B,\rh)=\cH$,
and hence the assertion follows easily.
\end{proof}

The following theorems show that we can simultaneously measure 
two nowhere commuting observables.
   
\begin{theorem}\label{th:13}
In any Hilbert space,
every pair of observables are simultaneously measurable
in any eigenstate of either observable.
\end{theorem}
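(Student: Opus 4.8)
The plan is to reduce the claim to the POVM criterion of Theorem~\ref{th:main}(ii), which says that $A$ and $B$ are simultaneously measurable in $\rho$ exactly when there is an $\R^{2}$-valued POVM $\Pi$ with $\sum_{y}\Pi(x,y)=E^{A}(x)$ on $\cC(A,\rho)$ and $\sum_{x}\Pi(x,y)=E^{B}(y)$ on $\cC(B,\rho)$. Because this criterion treats $A$ and $B$ symmetrically, it suffices to handle the case in which $\rho$ is an eigenstate of $A$, say $A\rho=a\rho$ for some $a\in\Sp(A)$; the case of an eigenstate of $B$ is obtained by interchanging the two observables.

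The key observation is that the eigenstate hypothesis collapses the cyclic subspace on the $A$-side. Since $A\rho=a\rho$ forces $\ran(\rho)\subseteq\cR(E^{A}(a))$, every polynomial $p(A)$ acts as the scalar $p(a)$ on $\ran(\rho)$, whence $\cC(A,\rho)=\ran(\rho)\subseteq\cR(E^{A}(a))$. Therefore, on $\cC(A,\rho)$ the spectral measure of $A$ is trivial: $E^{A}(a)$ restricts to the identity while $E^{A}(x)$ restricts to $0$ for every $x\neq a$. This is what renders the first marginal condition essentially free.

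With this in hand I would simply exhibit the POVM
\[
\Pi(x,y)=\delta_{x,a}\,E^{B}(y),
\]
where $\delta_{x,a}$ is the Kronecker delta. Each $\Pi(x,y)$ is positive and $\sum_{x,y}\Pi(x,y)=\sum_{y}E^{B}(y)=I$, so $\Pi$ is a genuine $\R^{2}$-valued POVM. Its second marginal is $\sum_{x}\Pi(x,y)=E^{B}(y)$ identically, hence in particular on $\cC(B,\rho)$, which verifies \eq{ii-2}. Its first marginal equals $I$ when $x=a$ and $0$ otherwise; restricting to $\cC(A,\rho)$ and using the degeneracy above, this agrees with $E^{A}(x)$ on $\cC(A,\rho)$, which verifies \eq{ii-1}. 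Theorem~\ref{th:main}(ii) then delivers the simultaneous measurability of $A$ and $B$ in $\rho$.

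The only point that requires care---and the step I expect to be the subtle one---is the precise meaning of equality \emph{on} $\cC(A,\rho)$ in \eq{ii-1}: one must check that, after compression by $P=\cP(\cC(A,\rho))$, the first marginal of $\Pi$ matches that of $E^{A}$, i.e. $P\bigl(\sum_{y}\Pi(x,y)\bigr)P=PE^{A}(x)P$ for all $x$. This is exactly where the containment $\cC(A,\rho)\subseteq\cR(E^{A}(a))$ together with the orthogonality $E^{A}(x)E^{A}(a)=0$ for $x\neq a$ are used. No deeper obstacle arises, since the eigenstate assumption has already reduced the $A$-side to a single spectral projection, and the $B$-side is met for free by letting the unconstrained marginal carry the entire spectral measure of $B$.
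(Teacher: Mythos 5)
Your proof is correct and follows essentially the same route as the paper: both reduce the claim to the POVM criterion of Theorem~\ref{th:main}(ii) and exhibit an explicit POVM concentrated on $x=a$ whose second marginal is the full spectral measure of $B$, with the first marginal condition trivialized by $\cC(A,\rho)\subseteq\cR(E^{A}(a))$. The only difference is your choice $\Pi(x,y)=\delta_{x,a}E^{B}(y)$ versus the paper's $\Pi(x,y)=\delta_{x,a}\ketbra{\ps}E^{B}(y)$; yours is in fact the cleaner (and strictly correct) choice, since the paper's operator $\ketbra{\ps}E^{B}(y)$ is not positive in general and the family sums to $\ketbra{\ps}$ rather than to the identity, so as written it is not literally a POVM on $\cH$.
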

\begin{proof}
Let $A,B\in\cO(\cH)$.
Suppose that we have $A\ps=a\ps$ and $\|\ps\|=1$ with $\ps\in\cH$ and $a\in\R$.
Then, $\Pi(x,y)=\de_{x,a}\ketbra{\ps}E^{B}(y)$ satisfies Eqs.~\eq{ii-1} and \eq{ii-2},
so that $A$ and $B$ are simultaneously measurable in $\ps$.
\end{proof}

From Theorems \ref{th:12} and \ref{th:13}, we can characterize all pairs of 
simultaneously 
measurable observables in the case where $\dim(\cH)=2$, as follows.
If $A$ and $B$ commute, then they are simultaneously measurable in every state.
If $A$ and $B$ do not commute, then they are non-degenerate, and hence 
simultaneously measurable if and only if the state is an eigenstate of $A$ or $B$.  

\begin{theorem}
In any Hilbert space with dimension more than 3,
there are nowhere commuting observables
that are simultaneously measurable in a state that is
not an eigenstate of either observable.
\end{theorem}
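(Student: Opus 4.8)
The plan is to build the example as a direct sum of two blocks on which the two observables play asymmetric roles, exploiting Theorem \ref{th:13}: on a block where the state is an eigenstate of $A$ the pair is simultaneously measurable, and likewise on a block where the state is an eigenstate of $B$. Concretely, write $\cH=\cH_1\oplus\cH_2$ with $\dim\cH_1=2$ and $\dim\cH_2=\dim\cH-2\ge 2$; note that two blocks each of dimension at least $2$ require $\dim\cH\ge 4$, which is exactly the hypothesis. On $\cH_1$ take non-degenerate, non-commuting qubit observables $A_1,B_1$ and a unit vector $\psi_1$ that is an eigenvector of $A_1$ but not of $B_1$; since $\dim\cH_1=2$ this makes $\cC(A_1,\psi_1)=\C\psi_1$ while $\psi_1$ is cyclic for $B_1$, i.e.\ $\cC(B_1,\psi_1)=\cH_1$. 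On $\cH_2$ choose non-degenerate $A_2,B_2$ with no common eigenvector and a unit vector $\psi_2$ that is an eigenvector of $B_2$ and a cyclic vector for $A_2$, so $\cC(A_2,\psi_2)=\cH_2$ and $\cC(B_2,\psi_2)=\C\psi_2$; such a triple exists in every dimension $\ge 2$ (fix the eigenbasis of $A_2$, pick $\psi_2$ with all components nonzero so it is $A_2$-cyclic, and complete $\psi_2$ to an orthonormal eigenbasis of a non-degenerate $B_2$ whose remaining eigenvectors avoid those of $A_2$). I would choose the four spectra so that $\Sp(A_1),\Sp(A_2)$ are disjoint and $\Sp(B_1),\Sp(B_2)$ are disjoint, and set $A=A_1\oplus A_2$, $B=B_1\oplus B_2$, $\Psi=\frac{1}{\sqrt{2}}(\psi_1+\psi_2)$ and $\rho=\ketbra{\Psi}$.

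Next I would verify the two negative requirements. Because $\Sp(A_1)\cap\Sp(A_2)=\emptyset$ and $\Sp(B_1)\cap\Sp(B_2)=\emptyset$, each eigenvalue of $A$ (resp.\ $B$) belongs to a single block and its eigenspace is contained in that block; hence a common eigenvector of $A$ and $B$ would lie in one $\cH_i$ and be a common eigenvector of $A_i,B_i$, which share none. Thus $A,B$ are nowhere commuting (equivalently $\val{\cm(A,B)}=0$, by the remark after Theorem \ref{th:SD}). Moreover $A\Psi=\frac{1}{\sqrt{2}}(a_1\psi_1+A_2\psi_2)$ with $A_2\psi_2$ not proportional to $\psi_2$, so $\Psi$ is not an eigenvector of $A$, and symmetrically not of $B$; therefore $\rho$ is not an eigenstate of either observable.

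The core computation is the identification of the cyclic subspaces. Using disjointness of the block spectra, polynomial interpolation gives a polynomial $p$ with $p=1$ on $\Sp(A_1)$ and $p=0$ on $\Sp(A_2)$, so $p(A)\Psi\propto\psi_1$ and hence both $\psi_1$ and $\psi_2$ lie in $\cC(A,\rho)$; applying powers of $A$ to each separately and using that they stay within their blocks yields $\cC(A,\rho)=\cC(A_1,\psi_1)\oplus\cC(A_2,\psi_2)=\C\psi_1\oplus\cH_2$, and symmetrically $\cC(B,\rho)=\cH_1\oplus\C\psi_2$. By Theorem \ref{th:13} the pair $A_1,B_1$ is simultaneously measurable in $\psi_1$ and the pair $A_2,B_2$ is simultaneously measurable in $\psi_2$, so by Theorem \ref{th:main}(ii) there are $\R^{2}$-valued POVMs $\Pi^{(1)}$ on $\cH_1$ and $\Pi^{(2)}$ on $\cH_2$ whose marginals equal $E^{A_i}$ on $\cC(A_i,\psi_i)$ and $E^{B_i}$ on $\cC(B_i,\psi_i)$. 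Setting $\Pi=\Pi^{(1)}\oplus\Pi^{(2)}$ gives a genuine $\R^{2}$-valued POVM on $\cH$, and since $E^A=E^{A_1}\oplus E^{A_2}$ and $E^B=E^{B_1}\oplus E^{B_2}$ are block diagonal and the cyclic subspaces split as above, the block identities assemble to $\sum_y\Pi(x,y)=E^A(x)$ on $\cC(A,\rho)$ and $\sum_x\Pi(x,y)=E^B(y)$ on $\cC(B,\rho)$, i.e.\ Eqs.~\eq{ii-1} and \eq{ii-2}. Theorem \ref{th:main}(ii) then shows $A$ and $B$ are simultaneously measurable in $\rho$, completing the construction.

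The delicate point I would watch most carefully is the eigenvalue bookkeeping that makes all requirements hold at once: the disjointness of the block spectra is simultaneously responsible for nowhere commutativity (forcing eigenspaces into single blocks) and for the clean splitting of $\cC(A,\rho)$ and $\cC(B,\rho)$ into block cyclic subspaces via interpolation. The only genuinely constructive ingredient is producing, in $\cH_2$ of arbitrary dimension $\ge 2$, a non-degenerate $B_2$ with an eigenvector $\psi_2$ that is cyclic for a non-degenerate $A_2$ sharing no eigenvector with $B_2$; this requires only a little care with the genericity of the chosen bases and presents no real obstacle. Everything else reduces to Theorems \ref{th:13} and \ref{th:main} together with the elementary fact that block-diagonal POVMs have block-diagonal marginals.
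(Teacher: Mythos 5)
Your construction is correct, but it takes a genuinely different route from the paper. The paper works in $\cH=\C^2\otimes\C^2$ with $A=\si_z\otimes I$, $B=\si_x\otimes I$ and the maximally entangled state \eq{EPR-3}; the key step is that $B$ has the same value as the commuting surrogate $C=I\otimes\si_x$ in that state, so the product POVM $E^{\si_z}(x)\otimes E^{\si_x}(y)$, whose marginals are $E^{A}$ and $E^{C}$, also has second marginal $E^{B}$ on $\cC(B,\ps)$ by the transitivity of $=_\ps$; higher dimensions are handled by embedding $\C^2\otimes\C^2$ as a subspace. You instead decompose $\cH=\cH_1\oplus\cH_2$ and glue together the two ``trivial'' instances of Theorem \ref{th:13} in opposite directions on the two blocks, using disjointness of the block spectra both to force nowhere commutativity and to split the cyclic subspaces via polynomial interpolation, so that the block POVMs assemble into one satisfying Eqs.~\eq{ii-1} and \eq{ii-2}. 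I checked the delicate points: the eigenspace localization argument for nowhere commutativity, the identifications $\cC(A,\rho)=\C\ps_1\oplus\cH_2$ and $\cC(B,\rho)=\cH_1\oplus\C\ps_2$, and the marginal identities of $\Pi^{(1)}\oplus\Pi^{(2)}$ on those subspaces all go through (explicitly, $\Pi^{(1)}(x,y)=\de_{x,a_1}E^{B_1}(y)$ and $\Pi^{(2)}(x,y)=\de_{y,b_2}E^{A_2}(x)$ work). What each approach buys: the paper's example is the physically canonical EPR scenario, which directly feeds the discussion in Section 8 and exercises the value-identity machinery of Section \ref{se:4}; yours is more elementary (it needs only Theorems \ref{th:13} and \ref{th:main}, not the transitivity of $=_\rh$), covers every dimension $\ge 4$ uniformly without an embedding-and-extension step, but produces a state that is blockwise an eigenstate of one of the two observables, so it is arguably a less striking witness of the phenomenon even though it fully establishes the existence claim.
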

\begin{proof}
Let $\cH$  be a Hilbert space with $\dim(\cH)=n>3$.
First we suppose $n=4$.
In this case we can assume without any loss of generality 
that $\cH$ is a Hilbert space of a pair of spin 1/2
particles, i.e., $\cH=\C^2\otimes \C^2$.
Obviously, the observables $A=\si_z\otimes I$ and $B=\si_x\otimes I$ 
on $\cH$ are nowhere commuting.
Let 
\beqa\label{eq:EPR-3}
\ps=\frac{1}{\sqrt{2}}
(\ket{\si_x=+1}\ket{\si_x=+1}+\ket{\si_x=-1}\ket{\si_x=-1}).
\eeqa
Let  $C=I\otimes \si_x$.  
Let $\Pi(x,y)=E^{\si_z}(x)\otimes E^{\si_x}(y)$.
Then $\Pi(x,y)$ is  a  $\R^2$-valued POVM satisfying
\beqa
\sum_{y}\Pi(x,y)&=&E^{A}(x),
\label{eq:EPR-1}\\
\sum_{x}\Pi(x,y)&=&E^{C}(y).
\label{eq:EPR-2}
\eeqa
From \Eq{EPR-3} we have
$
C=_{\ps}B,
$
so that $E^{C}(y)=E^{B}(y)$ on $\cC(B,\ps)$.
By the transitivity of the relation $=_{\psi}$, we have
\beqa
\sum_{x}\Pi(x,y)&=&E^{B}(y)
\eeqa
on $\cC(B,\ps)$.
Thus, $A$ and $B$ are simultaneously measurable.
In the general case, we can assume that the space
$\C^2\otimes \C^2$ is a subspace of $\cH$.
Then, it is easy to see that $A$ and $B$ are extended to
two nowhere commuting observables on $\cH$
and that they are simultaneously measurable in the state
$\psi$.
\end{proof}

\section{Discussions and conclusions}

We have considered the language for propositional logic constructed from
atomic formulas of the form $A=a$, where $A$ denotes an observable of
a quantum system described by a Hilbert space $\cH$ and $a$ denotes a real 
number.  
Every sentence in this language is called an observational proposition for $\cH$.
To every observational proposition $\ph$ we have assigned the 
$\cQ(\cH)$-valued truth value $\val{\ph}$, 
where $\cQ(\cH)$ is the lattice of projections on $\cH$. 
It is easy to check that this assignment of observational propositions to 
projections
on $\cH$, or equivalently to subspaces of $\cH$, is equivalent to the 
assignment
first proposed by Birkhoff and von Neumann \cite{BN36}.
Then, for any state $\rh$ we have assigned 
to every observational proposition $\ph$
the probability $\Pr\{\ph\|\rh\}$,
which is consistent with the Born statistical formula. 
We say that an observational proposition $\ph$ holds in a state $\rh$ 
if $\Pr\{\ph\|\rh\}=1$.
In the case where $\rh$ is a vector state
$\rh=\ketbra{\ps}$, 
$\Pr\{\ph\|\rh\}$ is the squared length of the vector $\val{\ph}\ps$,
and 
the observational proposition
$\ph$ holds in the state $\rh$ if and only $\ps\in\cR(\val{\ph})$.

Our language of observational propositions obviously commits the notion of
the values of observables.
Our language gives a systematic way to determine which 
sentences on the values of observables are true, partially true, or false
based on quantum logic.
The notion of the values of observables has been known to involve 
a difficulty suggested by the Kochen-Specker theorem \cite{KS67},
if we treat this notion in classical logic.
A key notion in our language that circumvents this difficulty
is the notion of joint determinateness.  This notion determines
the states in which given observables $A_1,\ldots,A_n$ have simultaneous
values as follows. We have introduced an observational proposition
$\cm(A_1,\ldots,A_n)$ defined by \Eq{SD}, and we say that
$A_1,\ldots,A_n$ are jointly determinate in state $\rh$
if $\cm(A_1,\ldots,A_n)$ holds in $\rh$.  This condition 
for a vector state $\rh=\ketbra{\ps}$ is equivalent
to that $\ps$ is a superposition of common eigenvectors of 
$A_1,\ldots,A_n$.  Since this proposition is a built-in proposition
in the sense that it is constructed from atomic formulas, 
quantum logic can determine the limitation for the notion of 
values of observables in its own right.

If our language could be interpreted in classical logic, 
we would have sufficiently many two-valued 
truth-value
assignments for all the sentences in the language,
and each assignment would give 
any observable $A$ only one real number $a$ such that
$A=a$ is true, so that $A$ is assigned the value $a$;
and then in any state the probability of the 
observational proposition would be interpreted 
as the ignorance as to which truth-value is 
actually assigned.

What the Kochen-Specker theorem denies is the existence
of such a two-valued assignment.
Instead, quantum logic allows contextual two-valued
assignments in the sense that if $A_1,\ldots,A_n$ are
jointly determinate in a state $\rh$, then we have
sufficiently many two-valued assignment
for all sentences in the sublanguage constructed by
atomic formulas of the form $A_1=a_1$, $\ldots$, or $A=a_n$,
where $a_1,\ldots,a_n$ denote arbitrary real numbers;
and then the probability of an observational proposition in this 
sublanguage in the state $\rh$ allows ignorance 
interpretation.  
The observational propositions in this sublanguage are called 
contextually well-formed for $A_1,\ldots,A_n$ in $\rh$.
Thus, in quantum logic the probability 
assignment is non-contextual but the two-valued
assignment is contextual.
This corresponds to the fact that the two-valued
assignment commits a result of a single measurement,
but the probability assignment commits only the statistics of 
results of many measurements.

One of our purpose of developing the language of observational propositions
is to extract another built-in proposition $A=B$ defined by \Eq{ID}
meaning that the observable $A$ and the observable $B$ has the same 
value.  
We say that $A$ and $B$ has the same value in state $\rh$ if
$A=B$ holds in $\rh$.
This condition for a vector state $\rh=\ketbra{\ps}$ is equivalent
to that $\ps$ is a superposition of common eigenvectors
of $A$ and $B$ belonging to the same eigenvalues;
in general $\rh$ is a mixture of such vector states.
As naturally expected, if $A=B$ holds in $\rh$ then $A$ and $B$ are
jointly determinate in $\rh$.
It is easy to see that if $A=B$ holds in $\rh$, 
we can simultaneously measure $A$ and $B$ in $\rh$, 
and each measurement gives the same measured value.
Moreover, in this case, we can simultaneously measure $A$ and 
any polynomial $f(A,B)$, so that if  we obtain the measurement
outcome $A=a$ then we obtain the measurement outcome
$f(A,B)=f(a,a)$.

Now, we are in a position to discuss the relation between 
quantum reality and measurement.
Every measurement is statistically equivalent to a model
describing a physical interaction
between the measured object and the probe followed by a subsequent 
measurement of the meter observable in the probe.
In this model, the measurement of the observable $A(0)$ 
in the state $\rh$
is replaced by the measurement of the meter observable
$M(\De t)$ after the interaction.  

In the conventional approach, the measurement of $A(0)$ is
considered to be correct if and only if  $A(0)$ and $M(\De t)$ have
the same probability distribution in the initial state $\rh\otimes\si$
for an arbitrary $\rh$ and a fixed $\si$, or equivalently the POVM
of the measurement coincides with the spectral measure of the 
measured observable. 
However, then the question arises about the status of the measured
value: how the measured value commits the reality of the measured system
just before the measurement.
A natural requirement is that the measured value
obtained from the $M(\De t)$ measurement reproduce the 
value of $A(0)$, but this requirement has a difficulty 
due to the Kochen-Specker theorem, which prohibits a
context-free assignment of the values of observables.
Our approach circumvents this difficulty using quantum logic.
In order to ensure that the given measuring process reproduces 
the value of measured observable before the interaction, we do not need
to assume the context-fee assignment.
Instead, we can justify it by our well-founded statement 
that  $A(0)$ and $M(\De t)$ have the same value in the state  $\rh\otimes\si$.
In that, we do not need a context-free value assignment, but
it is only required that there should be a context in which the values
of $A(0)$ and $M(\De t)$ are jointly assigned and they are the same.
We call this value the outcome obtained by the measurement.
Actually, this approach justifies the conventional definition of 
correct measurements of an observable as follows:
the measuring process reproduces the probability distribution in any 
state $\rh$ if and only if the measuring process also reproduces 
the value of the measured observable in any state $\rh$.
Thus, we can speak of the value of observables in a context-free
language based on quantum logic, and this language consistently
implies the reality of the measurement outcome in the contextual
language based on classical logic.

A major significance of our approach is that it gives us a criterion,
which does not exist in the conventional approach,
on what is the correct measurement of an observable in a given state.
As mentioned above, in the conventional approach the correct measurement
is characterized by the probability reproducibility in arbitrary states, 
but the probability reproducibility cannot be used as the criterion on
the correct measurement in a given state, as is obvious even in the 
classical case.
However, our new criterion applies to the measurement in a given sate
stating that the measurement of $A(0)$ 
in a given state $\rh$ is correct if and only if  $A(0)$ and $M(\De t)$ 
have the same value in the initial state $\rh\otimes\si$.

Another feature of our approach is to enable us to consider 
the state-dependent notion of simultaneous measurability.
The state-independent notion of simultaneous measurability is known
to be equivalent with the commutativity.  However, the state-dependent
notion of simultaneous measurability has not been given a right 
place in quantum mechanics, although a few has been considered 
as pathological exceptions of the uncertainty principle.

It has been claimed for long that if at time 0 the object is 
prepared in an
eigenstate of $A(0)$ and the observer actually measures the value of another 
observable $B(0)$ at time 0, then the observer can know both the 
values of two observables $A(0)$ and $B(0)$, even though
they are nowhere commuting.
Heisenberg discussed this case in his book \cite{Hei30}
published in 1930.
His reluctance to accept this case as a simultaneous measurement
is mainly due to the fact that this does not leave the system
in a joint eigenstate of $A(0)$ and $B(0)$.
However, the notions of state preparation and measurement
should have been clearly distinguished. In fact, it is widely accepted 
nowadays that any observable can be measured correctly without leaving 
the object in an eigenstate of the measured observable; for instance,
a projection $E$ can be correctly measured in a state $\ps$
with the outcome being 1 
leaving the object in the state $M\ps/\|M\ps\|$, where the
operator $M$ depends on the apparatus and satisfies $E=M^{\da}M$
(see, for example, a widely accepted text book by Nielsen and Chuang
\cite{NC00}).

On the other hand,
one of the variations of the Eeinstein-Podolsky-Rosen (EPR) argument
\cite{EPR35}  runs as follows.  
In the EPR state of two particles, I and II, the momentum
of particle I can be measured by directly and locally measuring 
the momentum of particle II taking into account the EPR correlation;
this follows from the EPR original argument stating that 
the locality of measurement ensures that the predicted correlation
determines the value of momentum of particle I. 
The locality of the momentum measurement of particle II 
also concludes that it does not disturb the particle I,
and hence we can simultaneously measure the position of particle I
by a direct measurement on particle I.
Thus, the momentum and position of particle I are simultaneously 
measurable, so that both the measured values corresponds to 
elements of reality.  However, quantum mechanics has no state 
to describe those results, and hence it should be incomplete.

In the conventional approach, we cannot discuss those cases 
in the light of general theory of quantum measurement, since
those simultaneous measurements are state-dependent.
In our approach, we have provided a general theory of 
state-dependent simultaneous measurements, and actually 
the above two cases are two special cases of simultaneous 
measurements characterized by our rigorous definition.

According to our theory, the simultaneous measurement of
$A(0)$ and $B(0)$ by the meter $M(\De t)$ is 
defined by the following two conditions:

(i)  $A(0)=f(M(\De t))$ holds in 
the state $\rh\otimes\si$.

(ii) $B(0)=g(M(\De t))$ holds in the state $\rh\otimes\si$.

From (i) we can conclude that $A(0)$ and $f(M(\De t))$ are
jointly determinate in $\rh\otimes\si$.
From (ii) the same is true for $B(0)$ and $g(M(\De t))$.
However, this does not imply that $A(0)$ and $B(0)$ are
jointly determinate in $\rh\otimes\si$.
Thus, the simultaneous measurability of $A(0)$ and $B(0)$
does not ensure that the two outcomes from the simultaneous
measurements has simultaneous reality.
This is because of the contextuality of the two defining 
conditions of simultaneous measurement.
Condition (i) ensures $A(0)=f(M(\De t))$ is contextually 
well-formed in $\rh\otimes\si$, and
condition (ii) ensures the same is true for $B(0)=g(M(\De t))$.
The statement ``$A(0)=f(M(\De t))$ and $B(0)=g(M(\De t))$''
holds in $\rh\otimes\si$, but this is not contextually well-formed
unless $A(0)$ and $B(0)$ are jointly determinate in 
$\rh\otimes\si$.
Thus, if the apparatus has made a simultaneous measurements
of nowhere commuting observables $A(0)$ and $B(0)$ 
and obtained the outcome $A(0)=a$ and $B(0)=b$,
we can use the fact $A(0)=a$ in one of the context 
including $A(0)$ as the reality of the measured object
subject to classical logic and the same is true for $B(0)=b$,
but we have no right to use both $A(0)=a$ and $B(0)=b$
as elements of the unified reality.

In conclusion, quantum logic sheds  a unique light on two 
facets of quantum mechanics.
The quantum state is defined as a probability 
measure on the lattice of observational propositions.
Then, the projection-valued truth-value assignment and 
the probability assignment, consistent with
the Born probabilistic interpretation, are non-contextual,
and explain our experience about the statistics of  results 
of many measurements.
This aspect of quantum mechanics has been
emphasized as the statistical interpretation typically 
formulated by Ballentine \cite{Bal70}.
On the other hand, the two-valued assignment to the
observational propositions is contextual,
and explains our experience about the results of single
measurements together with theoretical predictions 
in a context of sublanguage of observational propositions 
subject to classical logic.
This aspect of quantum mechanics has explained 
how we should apply our intuition about physical reality 
stemmed from classical physics to quantum mechanical objects,
and was typically stressed by Bohr's complementarity principle
\cite{Boh28}.

\section*{Acknowledgements}
This work was supported in part 
by the Grant-in-Aid for Scientific Research, No.~22654013
 and No.~21244007, of the JSPS.


\begin{thebibliography}{10}
\providecommand{\url}[1]{{#1}}
\providecommand{\urlprefix}{URL }
\expandafter\ifx\csname urlstyle\endcsname\relax
  \providecommand{\doi}[1]{DOI~\discretionary{}{}{}#1}\else
  \providecommand{\doi}{DOI~\discretionary{}{}{}\begingroup
  \urlstyle{rm}\Url}\fi

\bibitem{Bal70}
Ballentine, L.E.: The statistical interpretation of quantum mechanics.
\newblock Rev.\ Mod.\ Phys. \textbf{{42}}, 358--381 (1970)

\bibitem{BN36}
Birkhoff, G., von Neumann, J.: The logic of quantum mechanics.
\newblock Ann.\ Math. \textbf{{37}}, 823--845 (1936)

\bibitem{Boh28}
Bohr, N.: The quantum postulate and the recent development of atomic theory.
\newblock Nature (London) \textbf{{121}}, 580--590 (1928)

\bibitem{Dav76}
Davies, E.B.: Quantum Theory of Open Systems.
\newblock Academic, London (1976)

\bibitem{DL70}
Davies, E.B., Lewis, J.T.: An operational approach to quantum probability.
\newblock Commun.\ Math.\ Phys. \textbf{{17}}, 239--260 (1970)

\bibitem{EPR35}
Einstein, A., Podolsky, B., Rosen, N.: Can quantum-mechanical description of
  physical reality be considered complete?
\newblock Phys.\ Rev. \textbf{{47}}, 777--780 (1935)

\bibitem{Gud68}
Gudder, S.: Joint distributions of observables.
\newblock J. Math.\ Mech. \textbf{{18}}, 325--335 (1968)

\bibitem{Hei30}
Heisenberg, W.: The Physical Principles of the Quantum Theory.
\newblock University of Chicago Press, Chicago (1930).
\newblock [Reprinted by Dover, New York (1949, 1967)]

\bibitem{KS67}
Kochen, S., Specker, E.P.: The problem of hidden variables in quantum
  mechanics.
\newblock J. Math.\ Mech. \textbf{17}, 59--87 (1967)

\bibitem{NC00}
Nielsen, M.A., Chuang, I.L.: Quantum Computation and Quantum Information.
\newblock Cambridge University Press, Cambridge (2000)

\bibitem{84QC}
Ozawa, M.: Quantum measuring processes of continuous observables.
\newblock J. Math.\ Phys. \textbf{{25}}, 79--87 (1984)

\bibitem{03HUR}
Ozawa, M.: Physical content of {Heisenberg's} uncertainty relation:
  {limitation} and reformulation.
\newblock Phys.\ Lett.\ A \textbf{318}, 21--29 (2003)

\bibitem{03UPQ}
Ozawa, M.: Uncertainty principle for quantum instruments and computing.
\newblock Int.\ J. Quant.\ Inf. \textbf{1}, 569--588 (2003)

\bibitem{03UVR}
Ozawa, M.: Universally valid reformulation of the {Heisenberg} uncertainty
  principle on noise and disturbance in measurement.
\newblock Phys.\ Rev.\ A \textbf{67}, 042105 (6 pages) (2003)

\bibitem{04URN}
Ozawa, M.: Uncertainty relations for noise and disturbance in generalized
  quantum measurements.
\newblock Ann.\ Phys.\ (N.Y.) \textbf{311}, 350--416 (2004)

\bibitem{05PCN}
Ozawa, M.: Perfect correlations between noncommuting observables.
\newblock Phys.\ Lett.\ A \textbf{335}, 11--19 (2005)

\bibitem{06QPC}
Ozawa, M.: Quantum perfect correlations.
\newblock Ann.\ Phys.\ (N.Y.) \textbf{321}, 744--769 (2006)

\bibitem{07SMN}
Ozawa, M.: Simultaneous measurability of non-commuting observables and the
  universal uncertainty principle.
\newblock In: O.~Hirota, J.~Shapiro, M.~Sasaki (eds.) Proc. 8th Int.\ Conf.\ on
  Quantum Communication, Measurement and Computing, pp. 363--368. NICT Press,
  Tokyo (2007)

\bibitem{07TPQ}
Ozawa, M.: Transfer principle in quantum set theory.
\newblock J. Symbolic Logic \textbf{72}, 625--648 (2007)

\bibitem{Ta81}
Takeuti, G.: Quantum set theory.
\newblock In: E.G. Beltrametti, B.C. van Fraassen (eds.) Current Issues in
  Quantum Logic, pp. 303--322. Plenum, New York (1981)

\end{thebibliography}

\end{document}